\newcommand{\ram}{{
\begin{tikzpicture}[thick,scale=0.6, every node/.style={scale=0.6}, baseline={([yshift=-.8ex]current bounding box.center)}]
\protect\node[draw] at (0, 0) {RAM};
\protect\draw (-0.35,-0.35) -- (0.35,0.35);
\end{tikzpicture}
}}
\newcommand{\hole}{\texttt{??}}
\newcommand{\stitch}{\textsc{Stitch}}
\newcommand{\abshole}{A_{\hole}} 
\newcommand{\cost}[1]{\text{cost}(#1)}
\newcommand{\coststar}[1]{\text{cost}_{\alpha=0}(#1)}
\newcommand{\costt}[1]{\text{cost}_t(#1)}
\newcommand{\costlam}{\text{cost}_\lambda}
\newcommand{\costapp}{\text{cost}_\texttt{app}}
\newcommand{\costvar}{\text{cost}_\texttt{\$i}}
\newcommand{\costabs}{\text{cost}_\alpha}
\newcommand{\subst}{ \circ }
\newcommand{\gsym}{\mathcal{G}_\texttt{sym}}
\newcommand{\corpus}{\mathcal{P}}
\newcommand{\rewritestrategy}{\mathcal{R}}
\newcommand{\util}[1]{U_{\corpus,\rewritestrategy}(#1)}
\newcommand{\upperbound}[1]{\bar{U}_{\corpus,\rewritestrategy}(#1)}
\newcommand{\rewrite}[1]{\textsc{Rewrite}_\rewritestrategy(\corpus,#1)}
\newcommand{\rewriteone}[2]{\textsc{RewriteOne}(#1,#2)} 
\newcommand{\rewritelocations}[1]{\textsc{RewriteLocations}_\rewritestrategy(\corpus,#1)}
\newcommand{\matches}[1]{\textsc{Matches}(\corpus,#1)}
\newcommand{\blue}[1]{\textcolor{blue}{#1}}
\newcommand{\matt}[1]{}
\newcommand{\armandotodo}[1]{}
\newcommand{\para}[1]{\paragraph{\textbf{#1}}}
\newcommand{\eval}[1]{\llbracket{#1}\rrbracket}
\newtheorem{thm}{Theorem}
\newtheorem{lemma}[thm]{Lemma}
\definecolor{StitchDarkBlue}{RGB}{19, 39, 171}
\definecolor{StitchLightBlue}{RGB}{24, 156, 196}
\definecolor{StitchPurple}{RGB}{141, 46, 166}
\definecolor{StitchOrange}{RGB}{255, 171, 64}
\definecolor{StitchGreen}{RGB}{106, 168, 79}
\newcommand{\clk}{{ \raisebox{-.4ex}{ \includegraphics[height=12pt]{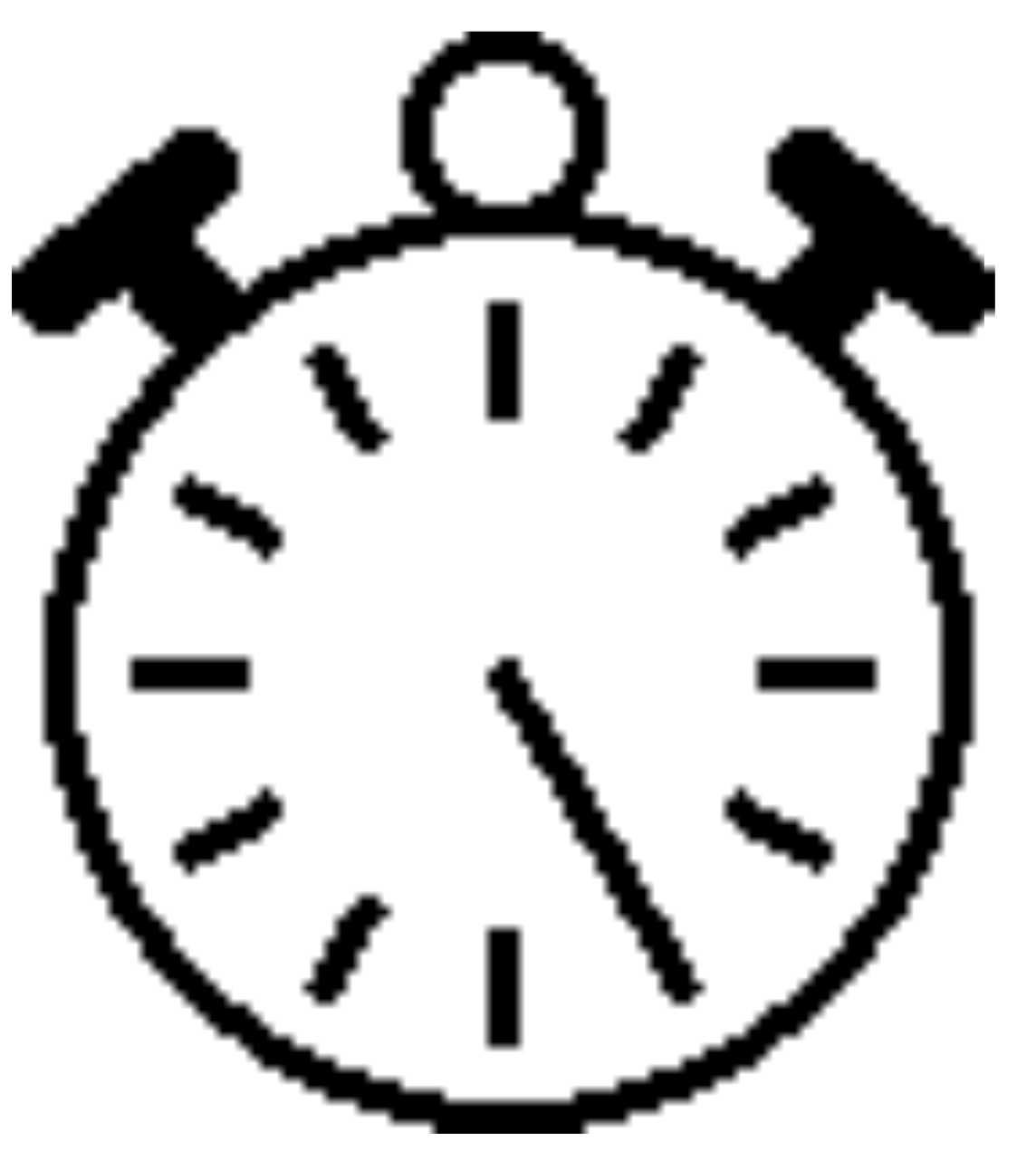}} }}
\newcommand{\clksmol}{\clk}
\begin{document}

\title{Top-Down Synthesis for Library Learning}         


\author{Matthew Bowers}
\affiliation{
  \institution{Massachusetts Institute of Technology}            
  \city{Cambridge}
  \state{MA}
  \country{USA}                    
}
\email{mlbowers@mit.edu}          

\author{Theo X. Olausson}

\affiliation{
  \institution{Massachusetts Institute of Technology}            
  \city{Cambridge}
  \state{MA}
  \country{USA}                    
}
\email{theoxo@mit.edu}          

\author{Lionel Wong}

\affiliation{
  \institution{Massachusetts Institute of Technology}            
  \city{Cambridge}
  \state{MA}
  \country{USA}                    
}
\email{zyzzyva@mit.edu}          

\author{Gabriel Grand}

\affiliation{
  \institution{Massachusetts Institute of Technology}            
  \city{Cambridge}
  \state{MA}
  \country{USA}                    
}

\email{grandg@mit.edu}          

\author{Joshua B. Tenenbaum}

\affiliation{
  \institution{Massachusetts Institute of Technology}            
  \city{Cambridge}
  \state{MA}
  \country{USA}                    
}

\author{Kevin Ellis}

\affiliation{
  \institution{Cornell University}            
  \city{Ithaca}
  \state{NY}
  \country{USA}                    
}

\author{Armando Solar-Lezama}

\affiliation{
  \institution{Massachusetts Institute of Technology}            
  \city{Cambridge}
  \state{MA}
  \country{USA}                    
}


\renewcommand{\shortauthors}{M. Bowers, T. X. Olausson, L. Wong, G. Grand, J. B. Tenenbaum, K. Ellis, A. Solar-Lezama}

\begin{abstract}

This paper introduces \textit{corpus-guided top-down synthesis} as a mechanism for synthesizing library functions that capture common functionality from a corpus of programs in a domain specific language (DSL). The algorithm builds abstractions directly from initial DSL primitives, using syntactic pattern matching of intermediate abstractions to intelligently prune the search space and guide the algorithm towards abstractions that maximally capture shared structures in the corpus. We present an implementation of the approach in a tool called \textsc{Stitch} and evaluate it against the state-of-the-art deductive library learning algorithm from DreamCoder. Our evaluation shows that \textsc{Stitch} is 3-4 orders of magnitude faster and uses 2 orders of magnitude less memory while maintaining comparable or better library quality (as measured by compressivity). We also demonstrate \textsc{Stitch}'s scalability on corpora containing hundreds of complex programs that are intractable with prior deductive approaches and show empirically that it is robust to terminating the search procedure early---further allowing it to scale to challenging datasets by means of early stopping.

\end{abstract}

\begin{CCSXML}
<ccs2012>
   <concept>
       <concept_id>10011007.10011074.10011092.10011782</concept_id>
       <concept_desc>Software and its engineering~Automatic programming</concept_desc>
       <concept_significance>500</concept_significance>
       </concept>
   <concept>
       <concept_id>10003752.10003809.10011254.10011256</concept_id>
       <concept_desc>Theory of computation~Branch-and-bound</concept_desc>
       <concept_significance>500</concept_significance>
       </concept>
 </ccs2012>
\end{CCSXML}

\ccsdesc[500]{Software and its engineering~Automatic programming}

\keywords{Program Synthesis, Library Learning, Abstraction Learning}  

\maketitle

\section{Introduction}\label{sec:introduction}

\begin{figure}[t]
    \centering
    \includegraphics[width=\textwidth]{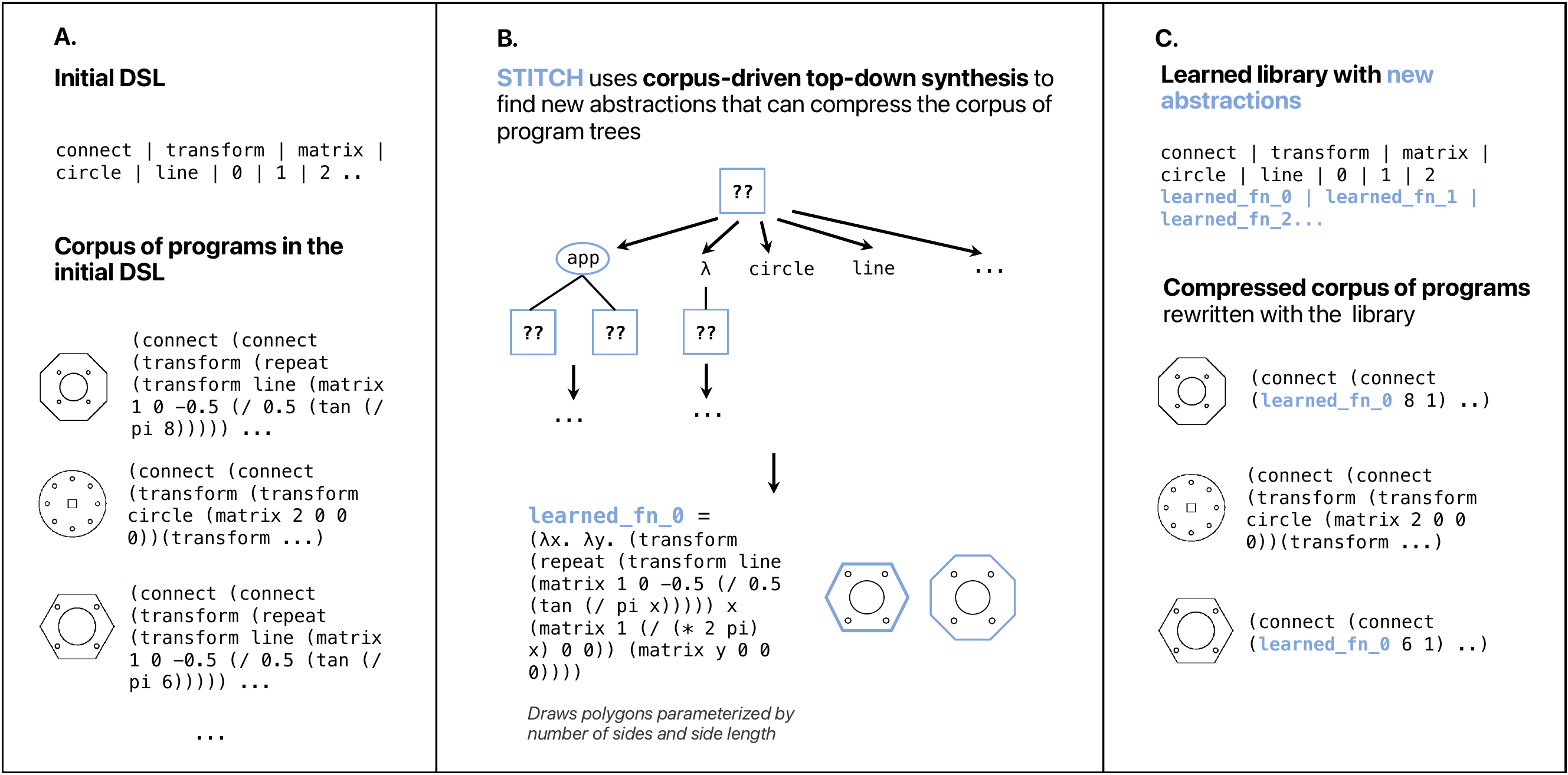}
    \caption{(\textbf{A}) Given an initial DSL of lower-level primitives and a corpus of programs written in the initial DSL, (\textbf{B}) \stitch\ uses a fast and memory efficient \textit{corpus-guided top-down search} algorithm to construct function abstractions which maximally capture shared structure across the corpus of programs. (\textbf{C}) \stitch\ automatically rewrites the initial corpus  using the learned library functions.}
\label{fig:stitch_splash}
\end{figure}

One way programmers manage complexity is by hiding functionality behind functional abstractions. For example, consider the graphics programs at the bottom of \cref{fig:stitch_splash}A. Each uses a generic set of drawing primitives and renders a technical schematic of a hardware component, shown to the left of each program. Faced with the task of writing \textit{more} of these rendering programs, an experienced human programmer likely would not continue using these low-level primitives. Instead, they would introduce new functional abstractions, like the one at the bottom of \cref{fig:stitch_splash}B which renders a regular polygon given a size and number of sides. Useful abstractions like these allow more concise and legible programs to render the existing schematics. More importantly, well-written abstractions should generalize, making it easier to write new graphics programs for similar graphics tasks.

Recently, the  \textit{program synthesis} community has introduced new approaches that can mimic this process, automatically building a library of functional abstractions in order to tackle more complex synthesis problems \cite{ellis2021dreamcoder, ellis2020dreamcoder,shin2019program,lazaro2019beyond,dechter2013bootstrap}. One popular approach to library learning is to search for common tree fragments across a corpus of programs, which can be introduced as new abstractions \citep{shin2019program,lazaro2019beyond,dechter2013bootstrap}. \citealt{ellis2021dreamcoder}, however, introduces an algorithm that reasons about variable bindings to abstract out well-formed \textit{functions} instead of just tree fragments. While it produces impressive results, the system in \citealt{ellis2021dreamcoder} takes a \textit{deductive} approach to library learning that is difficult to scale to larger datasets of longer and deeper input programs. This approach is deductive in that it uses semantics-preserving rewrite rules to attempt to refactor existing programs to expose shared structure. This requires representing and evaluating an exponentially large space of proposed refactorings to identify common functionality across the corpus. Prior work, such as \citealt{ellis2021dreamcoder, ellis2020dreamcoder}, approaches this challenge by combining a dynamic bottom-up approach to refactoring with version spaces to more efficiently search over the refactored programs. However, these deductive approaches face daunting memory and search requirements as the corpus scales in size and complexity.

This paper introduces an alternate approach to library learning, while preserving the focus on well-formed function abstractions from \citealt{ellis2021dreamcoder}. Instead of taking a \textit{deductive} approach based on refactoring the corpus with rewrite rules, we \textit{directly synthesize abstractions}. We call this approach \textit{corpus-guided top-down synthesis}, and it is based on the insight that when applied to the task of synthesizing abstractions, top-down search can be guided precisely towards discovering shared abstractions over a set of existing training programs. At every step of the search, syntactic comparisons between a partially constructed abstraction and the set of training programs can be used to strongly constrain the search space and direct the search towards abstractions that capture the greatest degree of shared syntactic structure.

We implement this approach in \stitch, a corpus-guided top-down library learning tool written in Rust (Fig. \ref{fig:stitch_splash}). \stitch\ is open-source and available both as a Python package and a Rust crate; the code, installation instructions, and tutorials are available at the GitHub\footnote{\url{https://github.com/mlb2251/stitch}}. We evaluate \stitch\ through a series of experiments (\cref{sec:experiments}), and find that: \stitch\ learns libraries of comparable quality to those found by the algorithm of \citealt{ellis2021dreamcoder} on their iterative bootstrapped library learning task, while being 3-4 orders of magnitude faster and using 2 orders of magnitude less memory (\cref{subsec:claim-1}); \stitch\ learns high quality libraries within seconds to single-digit minutes when run on corpora containing a few hundred programs with mean lengths between 76--189 symbols (sourced from \citealt{cogsci}), while even the simplest of these corpora lies beyond the reach of the algorithm of \citealt{ellis2021dreamcoder} (\cref{subsec:claim-2}); and that \stitch\ degrades gracefully when resources are constrained (\cref{subsec:claim-3}). We also perform ablation studies to expose the relative impact of different optimizations in \stitch\ (\cref{subsec:claim-4}). Finally, we show that \stitch\ is complementary to deductive rewrite approaches (\cref{subsec:claim-5}).

In summary, our paper makes the following contributions:
\begin{enumerate}[itemsep=0.5em]
    \item \textbf{Corpus-guided top-down synthesis (CTS)}: A novel, strongly-guided branch-and-bound algorithm for synthesizing function abstractions  (Sec. \ref{sec:technical}).
    
    \item \textbf{CTS for program compression}: An instantiation of the CTS framework for utility functions favoring abstractions that compress a corpus of programs (Sec. \ref{sec:framework-instantiation}).
    \item \textbf{\stitch}: A parallel Rust implementation of CTS for compression that achieves 3-4 orders of magnitude of speed and memory improvements over prior work, and the analysis of its performance, scaling, and optimizations through several experiments (Sec. \ref{sec:experiments}).

\end{enumerate}

\section{Overview}\label{sec:overview}

\begin{figure}[t]
    \centering
    \includegraphics[width=\textwidth]{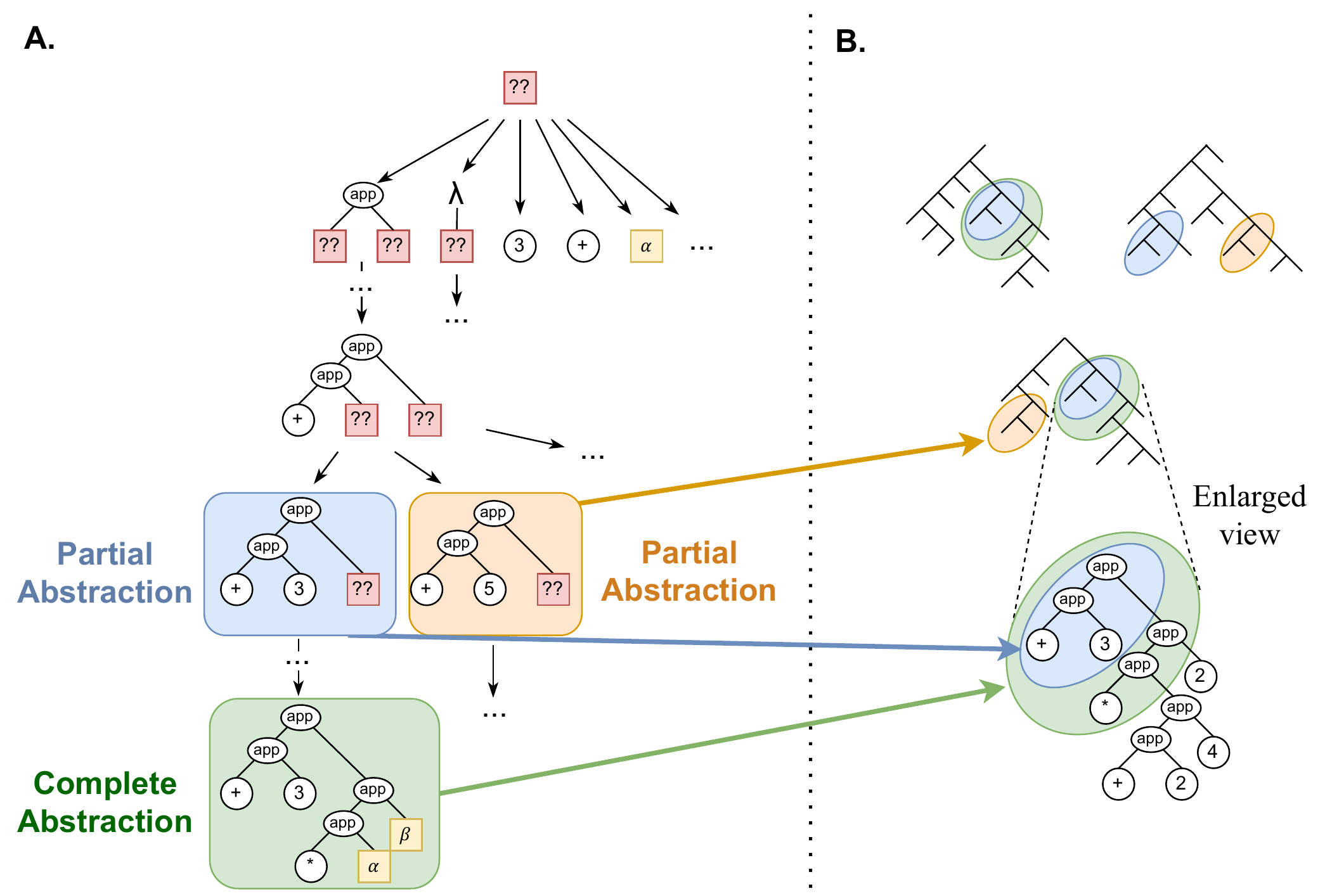}
    \caption{
    (\textbf{A}) Schematic of \textit{corpus-guided top-down search} (CTS). Partial abstractions can contain \textit{holes} indicating unfinished subtrees, denoted \hole, while complete abstractions do not contain holes. (\textbf{B})  Partial and complete abstractions \textit{match} at locations in the corpus. The blue partial abstraction can be expanded into the green complete abstraction so the green abstraction matches in a \textit{subset} of places that the blue abstraction does. For complete abstractions, \textit{matching} indicates that the subtree can be rewritten to use the abstraction.
} \label{fig:corpus_guided}
\end{figure}

In this section, we build intuition for the algorithmic insights that power \stitch. As a running example, we focus on learning a single abstraction from the following corpus of programs:
\begin{equation}
\begin{aligned}
    &\lambda x.\ \texttt{\blue{+ 3} (\blue{*} (+ 2 4) 2)}\\
    &\lambda xs.\ \texttt{map ($\lambda x.\ $\blue{+ 3} (\blue{*} 4 (+ 3 x))) xs}\\
    &\lambda x.\ \texttt{* 2 (\blue{+ 3} (\blue{*} x (+ 2 1)))}\\
\end{aligned}
\label{eq:overview-examples}
\end{equation}

The notion  of what a \textit{good} abstraction is depends on the application, so our algorithm is generic over the utility function that we seek to maximize. Following prior work \citep{ellis2021dreamcoder,shin2019program,dechter2013bootstrap} we focus on \textit{compression} as a utility function: a good abstraction is one which minimizes the size of the corpus when rewritten with the abstraction.
The utility function used by \stitch\ is detailed in \cref{sec:framework-instantiation} and corresponds exactly to the compression objective, but at a high level the function seeks to maximize the product of \textit{the size of the abstraction} and the \textit{number of locations} where the abstraction can be used. This product balances two key features of a highly compressive abstraction: the abstraction should be general enough that it applies in many locations, but specific enough that it captures a lot of structure at each location.

The optimal abstraction that maximizes our utility in this example is:
\begin{equation}
\texttt{\blue{fn0}} = \lambda\alpha.\ \lambda\beta. \texttt{(+ 3 (* $\alpha$ $\beta$))}
\end{equation}
And the shared structure that this abstraction captures is highlighted in blue in \cref{eq:overview-examples}. When rewritten to use this abstraction, the size of the resulting programs is minimized:
\begin{equation}
\begin{aligned}
    &\lambda x.\ \texttt{\blue{fn0} (+ 2 4) 2}\\
    &\lambda xs.\ \texttt{map ($\lambda x.\ $\blue{fn0} 4 (+ 3 x)) xs}\\
    &\lambda x.\ \texttt{* 2 (\blue{fn0} x (+ 2 1))}\\
\end{aligned}
\label{eq:overview-examples-rewritten}
\end{equation}

\stitch\ synthesizes the optimal abstraction directly through \textit{top-down search}. Top-down search methods construct program syntax trees by iteratively refining partially-completed programs that have unfinished \textit{holes}, until a complete program that meets the specification is produced \cite{feser2015synthesizing,polikarpova2016program, balog2016deepcoder,ellis2021dreamcoder,nye2021blended,2020swaratneuraladmissableheuristics}. This kind of search is often made efficient by identifying branches of the search tree that can be pruned away because the algorithm can efficiently determine that none of the  programs in that branch can be correct. We aim to apply a similar idea to the problem of synthesizing good abstractions; the idea is to explore the space of functions in the same top-down way, but in search of a function that maximizes the utility measure. The key observation is that this new objective affords even more aggressive pruning opportunities than the traditional correctness objective, allowing us to synthesize optimal abstractions very efficiently. We call this approach \textit{corpus-guided top-down search} (CTS).

\para{Corpus-Guided Top-Down Search.}
Like other top-down synthesis approaches, our algorithm explores the space of abstractions by repeatedly refining abstractions with holes, as illustrated in Fig. \ref{fig:corpus_guided}A. We call abstractions with holes \textit{partial abstractions} in contrast to \textit{complete abstractions} which have no holes. Our top-down algorithm searches over abstraction \textit{bodies}, so to synthesize the optimal abstraction \texttt{fn0} in our running example, we synthesize its body: \texttt{(+ 3 (* $\alpha$ $\beta$))}.

We say that a partial abstraction \textit{matches} at a subtree in the corpus if there’s a possible assignment to the holes and arguments that yields the subtree. For example, consider the subtree \texttt{(+ 3 (* (+ 2 4) 2)} from the first program in our running example, shown as a syntax tree at the bottom of figure \ref{fig:corpus_guided}B. The partial abstraction \texttt{(+ 3 \hole)} shown in blue matches here with the hole $\hole = \texttt{(* (+ 2 4) 2)}$, and the complete abstraction \texttt{(+ 3 (* $\alpha$ $\beta$))} matches here with $\alpha = \texttt{(+ 2 4)}$ and $\beta = \texttt{2}$. For complete abstractions, matching corresponds to being able to use the abstraction to rewrite this subtree, resulting in compression. We refer to the set of subtrees at which a complete or partial abstraction matches as the set of \textit{match locations}. For example, the three match locations of  \texttt{(+ 3 (* $\alpha$ $\beta$))} are the subtree \texttt{(+ 3 (* (+ 2 4) 2))} in the first program, \texttt{(+ 3 (* 4 (+ 3 x)))} in the second program, and \texttt{(+ 3 (* x (+ 2 1)))} in the third program.

In traditional top-down synthesis, a branch of search can be safely pruned by proving that a program satisfying the specification cannot exist in that branch. In CTS, we can safely prune a branch of search if we can prove that it cannot contain the optimal abstraction. One way to prove this is by computing an upper bound on the utility of abstractions in the branch, and discarding the branch if we have previously found an abstraction with a utility that is higher than this bound. An efficient and conservative way to compute this upper bound is to overapproximate the set of match locations, then upper bound the compressive utility gain from each match location.

Our key observation to compute this bound is that during search, expanding a hole in a partial abstraction yields an abstraction that is more precise and thus matches at a \textit{subset} of the locations that the original matched at. This \textit{subset observation} is depicted in Fig. \ref{fig:corpus_guided} where the refinement of the partial abstraction \texttt{(+ 3 \hole)} (blue) to the goal \texttt{(+ 3 (* $\alpha$ $\beta$))} (green) results in a larger abstraction that matches at a subset of the locations (match locations shown at the top of Fig. \ref{fig:corpus_guided}B). An important consequence of this is that the match locations of a partial abstraction serves as an over-approximation of the match locations of any abstraction in this branch of top-down search.

To upper bound the compressive utility gained by rewriting at a match location, notice that the most compressive abstraction that matches at a subtree is the constant abstraction corresponding to the subtree itself. For example, the largest possible abstraction at \texttt{(+ 3 (* (+ 2 4) 2)} is just \texttt{(+ 3 (* (+ 2 4) 2)}. In other words, the size of the subtree is a strict upper bound on how much compression can be achieved at a match location.
Thus we get the upper bound for some partial abstraction $\abshole$:
\begin{equation}\label{eq:utility_upper_bound}
    U_{\text{upperbound}}(\abshole) = \sum_{e \in \text{matches}(\abshole)}{\text{size}(e)}
\end{equation} 
Equipped with this bound, our algorithm performs a branch-and-bound style top-down search, where at each step of search it discards all partial abstractions that have utility upper bounds that are less than the utility of the best complete abstraction found so far.

Our full algorithm presented in \cref{sec:technical} has some additional complexity. It handles rewriting in the presence of variables soundly, it uses an exact utility function that accounts for additional compression gained by using the same variable in multiple places, and it handles situations where match locations overlap and preclude one another. We also introduce two other important forms of pruning while maintaining the optimality of the algorithm, and we use the upper bound as a heuristic to prioritize more promising branches of search first.

\begin{figure}[t!]
    \centering
    \includegraphics[clip, trim=4cm 6cm 4cm 4cm, width=\textwidth]{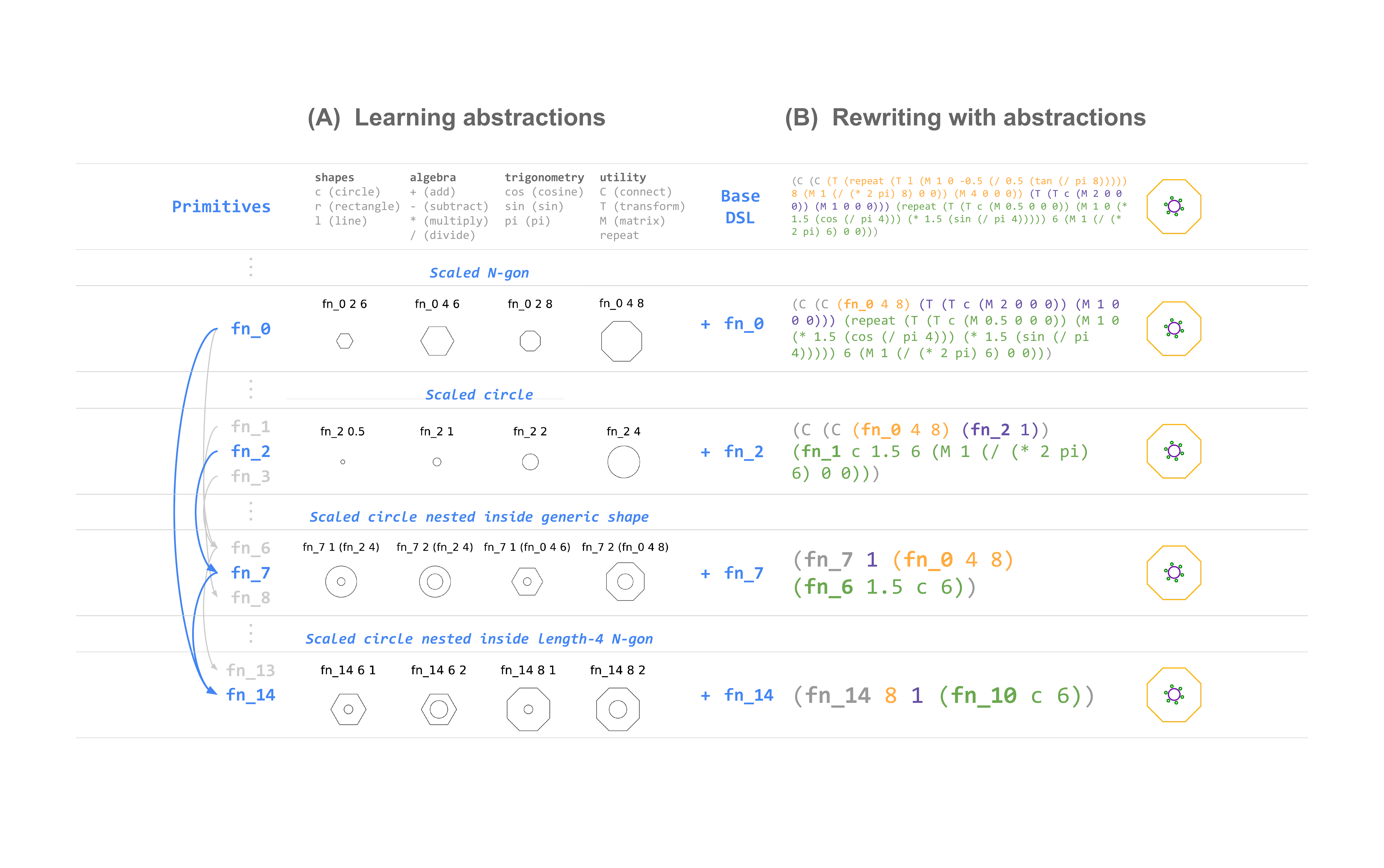}
    \caption{
    Visualization of \stitch\ library learning on the nuts-bolts subdomain from \citealt{cogsci}. (\textbf{A}) From the base DSL primitives (top row), \stitch\ iteratively discovers a series of abstractions that compress programs in the domain. Arrows demonstrate how abstractions from selected iterations build on one another to achieve increasingly higher-level behaviors. (\textbf{B}) Rewriting a single item from the domain with the cumulative benefit of discovered abstractions yields increasingly compact expressions. Colors indicate correspondence between object parts and program fragments: \textcolor{StitchOrange}{orange = outer octagon}, \textcolor{StitchGreen}{green = ring of six circles}, \textcolor{StitchPurple}{purple = inner circle}.
} \label{fig:nuts_bolts_abstractions}
\end{figure}

\para{Building Up Abstraction Libraries.} The top-down search algorithm described above yields a single abstraction. However, we can easily run this algorithm for multiple iterations on a corpus of programs to build up an entire \textit{library of abstractions}. Fig.~\ref{fig:nuts_bolts_abstractions} illustrates the power of this kind of iterative library learning, which interleaves compression and rewriting. At each iteration, \stitch\ discovers a single abstraction that is used to rewrite the entire corpus of programs. Successive iterations therefore yield abstractions that build hierarchically on one another, achieving increasingly higher-level behaviors. As the library grows to contain richer and more complex abstractions, individual programs shrink into compact expressions.

\para{Structure of the Paper.} In the subsequent sections, we formalize our corpus-guided top-down search algorithm (\cref{sec:technical}), its application to the problem of compression (\cref{sec:framework-instantiation}), and how it may be layered on top of data structures such as version spaces (\cref{subsec:hybridapproach}). We then report experimental results (\cref{sec:experiments}) showcasing both diverse library learning settings as well as ablations of \stitch's search mechanisms. Finally, we conclude by situating \stitch\ within the landscape of related work (\cref{sec:related-work}) and future work (\cref{sec:conclusion}) in the areas of library learning and program synthesis.

\section{Corpus-Guided Top-Down Search} \label{sec:technical}


\subsection{Problem Setup}


In this section we provide the definitions necessary to understand our problem.


\para{Grammar} Our algorithm operates on lambda-calculus expressions with variables represented through de Bruijn indices \cite{de1972lambda}; expressions come from a context-free grammar of the form $e \Coloneqq \lambda.\ e' \mid e'\ e'' \mid \$i \mid t$, where $t \in \gsym$ refers to the set of built-in primitives in the domain-specific language. For example, in an arithmetic domain $\gsym$ would include operators like \texttt{+} and constants like \texttt{3}. We say that $e'$ is a subexpression of $e$ if $e = C[e']$, where $C$ is a context as defined in the contextual semantics of \citealt{Felleisen}. An expression is \textit{closed} if it has no free variables, in which case the expression is also a \textit{program}. A set of programs $\corpus$ is  a \textit{corpus}.

When representing variables through de Bruijn indices, $\$i$ refers to the variable bound by the $i$th closest lambda above it. Thus $\lambda x.\ \lambda y.\ (x\ y)$ is represented as $\lambda.\ \lambda.\ (\$1\ \$0)$. Beta reduction with de Bruijn indices requires \textit{upshifting}: incrementing free variables in the argument when substitution recurses into a lambda, since the free variables must now point past one more lambda. Similarly, inverse beta reduction requires \textit{downshifting}.

\para{Abstraction} Given a grammar $\mathcal{G}$, we define the \textit{abstraction grammar} $\mathcal{G}_A$ as $\mathcal{G}$ extended to include \textit{abstraction variables}, denoted by Greek letters $\alpha$, $\beta$, etc. Formally, $A \Coloneqq \lambda.\ A' \mid A'\ A'' \mid \$i \mid t \mid \alpha$ where $t \in \gsym$. A term $A$ from this grammar represents the \textit{body} of an abstraction; e.g. the abstraction $\lambda \alpha.\ \lambda \beta.\ (\texttt{+}\ \alpha\ \beta)$ is simply represented by the term $(\texttt{+}\ \alpha\ \beta)$ from the language of $\mathcal{G}_A$.

\para{Partial Abstraction} A \textit{partial abstraction} $\abshole$ is an abstraction that can additionally include \textit{holes}. A \textit{hole}, denoted by $\hole$, represents an unfinished subtree of the abstraction. Thus,  $\abshole \Coloneqq A \mid \hole \mid \lambda.\  \abshole' \mid \abshole'\  \abshole''$. Any abstraction is thus also a partial abstraction. Given a grammar $\mathcal{G}$, the grammar of partial abstractions is denoted by $\mathcal{G}_{\abshole}$. Each hole in a partial abstraction $A$ can be referred to by a \textit{unique} index, which can be explicitly written as $\hole_i$.

\newcommand{\lambdaU}{\textsc{LambdaUnify}}

\para{Lambda-Aware Unification}
We introduce \textit{lambda-aware unification} as a modification of traditional unification adapted to our algorithm. $\lambdaU(A,e)$ returns a mapping (if one exists) from abstraction variables and holes to expressions $[\alpha_i \to e_i',\ \ldots,\  \hole_j \to u_j',\ \ldots]$ such that
\begin{equation}\label{eqn:beta-red}
    (\lambda \alpha_i.\ \ldots\ \lambda \hole_j.\ \ldots A)\ e_i'\ \ldots\ u_j'\ \ldots = e
\end{equation}
through beta reduction. A key difference from traditional unification is that the expression $e$ may be deep inside a program written using de Bruijn indices, so \lambdaU\ must perform some index arithmetic in order to generate its output mapping; in particular, raising a subtree out of a lambda during this inverse beta reduction requires downshifting variables in it. 

The definition of \lambdaU{} is presented in \cref{fig:lambda_unify} (left).  In \textsc{U-App}, $\textsc{merge}(l_1,l_2)$ merges two $[ \alpha_i \to e_i ]$ mappings to create a new mapping that includes all bindings from $l_1$ and $l_2$, and fails if the same abstraction variable $\alpha$ maps to different expressions in $l_1$ and $l_2$. The rule \textsc{U-Same} applies only when the abstraction argument to \lambdaU{} is an expression (i.e. it contains no holes or abstraction variables) and is syntactically identical to the expression it is being unified with. In \textsc{U-Lam}, \textsc{DownshiftAll} returns a new mapping with all abstracted expressions $e$ and hole expressions $u$ downshifted using the $\downarrow$ operator presented in \cref{fig:lambda_unify} (right). The $\downarrow$ operator has been modified from traditional downshifting to allow for partial abstractions that contain holes $\hole_j$ to break the rules of variable binding, because holes represent unfinished subtrees of the abstraction.

In particular, a new syntactic form "$\&i$" is used to represent a $\$i$ variable that has been downshifted further than traditional downshifting would permit. $\&i$ variables are created by $\downarrow$ when a traditional downshift would otherwise convert a free variable to a (different, incorrect) bound variable. These variables represent references to lambdas present within the body of the abstraction and are allowed in expressions $u_j'$ bound to holes but not expressions $e_i'$ bound to abstraction variables.

To account for $\&i$ variables, the beta reduction used in \cref{eqn:beta-red}, which is defined in \cref{fig:subst-upshift}, uses a modified upshift operator $\uparrow$ defined to be an inverse to the $\downarrow$ operator. With this modification of beta reduction, any $\&i$ variables with negative indices will ultimately be shifted back to positive indices during reduction. Furthermore, the beta reduction procedure is equivalent to traditional beta reduction when there are no holes and thus no $\&i$ variables.

\begin{figure}
    \begin{subfigure}{.45\textwidth}

    \footnotesize
    \centering
    \begin{mathpar}
    \inferrule*[left=U-AbsVar]
        { }
        {\textsc{LambdaUnify}(\alpha, e) \rightsquigarrow [ \alpha \to e ]}
    \and
    \inferrule*[left=U-Hole]
        { }
        {\textsc{LambdaUnify}(\hole_i, e) \rightsquigarrow [ \hole_i \to e ]}
    \and
    \inferrule*[left=U-App]
        {  \textsc{LambdaUnify}(A_1,e_1) \rightsquigarrow l_1 \\ \textsc{LambdaUnify}(A_2,e_2) \rightsquigarrow l_2 \\ l = \textsc{merge}(l_1,l_2) }
        {\textsc{LambdaUnify}((A_1\ A_2),(e_1\ e_2)) \rightsquigarrow l}
    \and
    \inferrule*[left=U-Lam]
        {
            \textsc{LambdaUnify}(A, e) \rightsquigarrow l' \\
            l = \textsc{DownshiftAll}(l')
        }
        {\textsc{LambdaUnify}((\lambda.\ A),(\lambda.\ e)) \rightsquigarrow l}
    \and
    \inferrule*[left=U-Same]
        { }
        {\textsc{LambdaUnify}(e,e) \rightsquigarrow [ ]}
    \end{mathpar}
    \end{subfigure}
    \begin{subfigure}{.45\textwidth}
    \centering \footnotesize
    \begin{align*}
    &\textsc{DownshiftAll}([\alpha_i \to e_i',\ \hole_j \to u_j',\ ...]) \\&\ \ \ \ = [\alpha_i \to\ \downarrow_0 e_i',\ \hole_j \to\ \downarrow_0 u_j',\ ...]\\ \\
    &\downarrow_d \lambda.b = \lambda.\downarrow_{d+1} b \\ 
    &\downarrow_d (f x) = (\downarrow_d f\ \downarrow_d x) \\
    &\downarrow_d \$i = \begin{cases}
        \$i, & \text{if } i < d \\
        \$(i-1), & \text{if } i > d \\
        \&(i-1), & \text{if } i = d \\
    \end{cases}\\ 
    &\downarrow_d \&i = \&(i-1),\\
    &\downarrow_d t = t,\ \text{if $t$ is a primitive (i.e. $t \in \gsym$) } \\ 
    \end{align*}
    \end{subfigure}
    \caption{(\textbf{Left}) Inference rules for lambda-aware unification. (\textbf{Right}) Definition of \textsc{DownshiftAll}.}
    \label{fig:lambda_unify}

\end{figure}

\begin{figure}
    \begin{subfigure}{.45\textwidth}
    \centering \footnotesize
    \begin{align*} 
    &  (\lambda \alpha_i.\ \ldots\ \lambda \hole_j.\ \ldots A)\ e_i'\ \ldots\ u_j'\ \ldots \\ & \ \ \ \ = [\alpha_i \to e_i',\ \hole_j \to u_j',\ ...] \subst A\\\\
    &l \subst \lambda.b = \lambda. \textsc{UpshiftAll}(l) \subst b \\
    &l \subst (f x) = (l \subst f) (l \subst x) \\
    &l \subst \$i = \$i \\
    &l \subst \alpha_i = l[\alpha_i] \\
    &l \subst \hole_i = l[\hole_i] \\
    &l \subst t = t,\ \text{for $t \in \gsym$} \\
    \end{align*}
    \end{subfigure}
    \begin{subfigure}{.45\textwidth}
    \centering \footnotesize
    \begin{align*} 
    &\textsc{UpshiftAll}([\alpha_i \to e_i',\ \hole_j \to u_j',\ ...]) \\&\ \ \ \ = [\alpha_i \to\ \uparrow_0 e_i',\ \hole_j \to\ \uparrow_0 u_j',\ ...]\\\\
    &\uparrow_d \lambda.b = \lambda.\uparrow_{d+1} b \\ 
    &\uparrow_d (f x) = (\uparrow_d f\ \uparrow_d x) \\
    &\uparrow_d \$i = \begin{cases}
        \$i, & \text{if } i < d \\
        \$(i+1), & \text{if } i \geq d \\
    \end{cases}\\ 
    &\uparrow \&i = \begin{cases}
        \&(i+1), & \text{if } i + 1 \neq d \\
        \$(i+1), & \text{if } i + 1 = d \\
    \end{cases}\\
    &\uparrow_d t = t,\ \text{for $t \in \gsym$} \\ 
    \end{align*}
    \end{subfigure}
    \caption{(\textbf{Left}) Definition of modified beta reduction and substitution ($\subst$). (\textbf{Right}) Definition of \textsc{UpshiftAll}.}
    \label{fig:subst-upshift}
\end{figure}

We provide a proof of the correctness of \lambdaU\ with respect to \cref{eqn:beta-red} in  Appendix B, as well as a Coq proof of the correctness in \texttt{stitch.v} in the supplemental material. However, to understand the key idea behind the proof we focus here on an example illustrating the different cases involving the $\downarrow$ operator and $\&i$ indices. Consider what happens when you run
\begin{equation}\label{eqn:example1}
\lambdaU(\lambda.\ f\ \hole_0,\ \lambda.\ e)
\end{equation}
By \cref{eqn:beta-red}, the goal is to produce a mapping $[\hole_0 \rightarrow u'_0]$
such than when replacing $\hole_0$ with $u'_0$ it 
produces $(\lambda.\ e)$. In other words, $(\lambda.\ \lambda.\ f\ \$1)\ u'_0 = \lambda.\ e$.
Now, suppose 
\begin{equation} \label{eqn:example1b}
\lambdaU(f~ \hole_0,\ e)\rightsquigarrow [\hole_0 \rightarrow \lambda.\ \$i]
\end{equation}
This means that $(\lambda.\ f\ \$0)\ (\lambda.\ \$i) = e$ so $e = f\ (\lambda.\ \$i)$. There are three possibilities for $i$ that need to be considered, corresponding to the 3 cases in the definition of $\downarrow_d \$i$. In the first case, when $i=0$ and thus $e = f\ (\lambda.\ \$0)$, it means that $\$i$ in \cref{eqn:example1b} is bound to the lambda in the return mapping. In this case, \textsc{DownshiftAll} in \textsc{U-Lam} should not do anything because if we can replace $\hole_0$ with $(\lambda.\ \$0)$ in $(f\ \hole_0)$ to produce $e$, the same replacement in $(\lambda.\ f\ \hole_0)$ will produce $(\lambda.\ e)$. 

The second case is when $i \geq 2$; this means that 
 $e$ has some variables that were defined outside of it and they have been captured by $\hole_0$. For example, if $i=2$ and thus $e= f\ (\lambda.\ \$2)$, the solution to \cref{eqn:example1} is $[\hole_0 \rightarrow \lambda.\ \$1]$, since 
$(\lambda.\ \lambda.\ f\ \$1) (\lambda.\ \$1) = \lambda.\ e$;
in other words, when computing 
$\lambdaU(\lambda.~ f~ \hole_0,\ \lambda.\ e)$ from 
\cref{eqn:example1b}, the $\$i$ in the return mapping has to be downshifted to account for the fact that it will be substituted inside an additional lambda. 

The third case, when $i=1$, is more problematic. In this case, we have that $e = f\ (\lambda.\ \$1)$. This creates a problem, since there is no $i$ such that 
$(\lambda.\ \lambda.\ f\ \$1) (\lambda.\ \$i) = \lambda.\ e$; the $\$i$ needs to refer to the lambda surrounding $e$. The downshift operator addresses this through the special treatment of the $\&i$ index. 

\lambdaU\ relates to prior work on unification modulo binders \cite{Huet75,Miller91,Miller92,Dowek95,Dowek96} but focuses on a more efficiently solvable syntax-driven subset of the more general unification problems tackled in this prior work. We give a more detailed comparison to this prior work in \cref{sec:related-work}.

\para{Match Locations} The set of \textit{match locations} of a partial abstraction $\abshole$ in a corpus $\corpus$, denoted $\matches{\abshole}$, is the maximal set of context-expression pairs $\{ (C_1,e_1), (C_2,e_2),\ \ldots, (C_k,e_k) \}$ such that $\forall k.\ p = C_k[e_k]$ and $p \in \corpus$ and $\textsc{LambdaUnify}(e_k,\abshole)$ succeeds. We discard locations where the mapping produced by \textsc{LambdaUnify} has \&i indices in expressions bound by abstraction variables, while we allow them in expressions bound by holes as holes are allowed to violate variable binding rules. In Section \ref{sec:algorithm} we explain how we maintain the set of matches incrementally.

\para{Rewrite Strategies} A corpus $\corpus$ can be \textit{rewritten} to use a complete abstraction $A$ as follows. We introduce a new terminal symbol $t_A$ into the symbol grammar $\gsym$ to represent the abstraction, and consider it semantically equivalent to $(\lambda \alpha_0.\ ...\ \lambda \alpha_k.\ A)$. We then re-express $\corpus$ in terms of $t_A$ as follows. We can replace the match location $(C,e) \in \matches{A}$ with $t_A$ applied with the argument assignments $[ \alpha_i \to e_i ] = \textsc{LambdaUnify}(A,e)$ to yield a semantically equivalent expression, $C[(t_A\ e_0\ ...\ e_k)]$. Note that since this is a complete abstraction, it contains no holes.

One complication is that rewriting at one match location may \textit{preclude} rewriting at another match location if they overlap with each other. A \textit{rewrite strategy} $\rewritestrategy$ is a procedure for selecting a subset of $\matches{A}$ to rewrite at, in which no match precludes another match. We refer to this set as $\rewritelocations{A}$. We refer to rewriting a corpus $\corpus$ under an abstraction $A$ with rewrite strategy $\rewritestrategy$ by $\rewrite{A}$. We refer to rewriting at only a single particular match location $m$ with abstraction $A$ by $\rewriteone{m}{A}$.

\para{Utility} CTS optimizes a user-defined utility function $\util{A}$ which scores an abstraction $A$ given a corpus $\corpus$  and rewrite strategy $\rewritestrategy$. There are no strict constraints on the form or properties of the utility function. While in \cref{sec:framework-instantiation} we will focus on compressive utility functions, our framework does not generally require this.

We say that a rewrite strategy is \textit{optimal} with respect to a utility function if the rewrite strategy chooses locations to rewrite at such that the utility is maximized. A na\"ive optimal rewrite strategy exhaustively checks the utility of rewriting with each of the $2^{|\matches{A}|}$ possible subsets of $\matches{A}$, however depending on the specific utility function computing an optimal or approximately optimal strategy may be significantly more computationally tractable. In \cref{sec:framework-instantiation} we detail the rewrite strategy used by \stitch, which is an optimal, linear-time rewrite strategy for compressive utility functions based on dynamic programming.

\subsection{Algorithm}\label{sec:algorithm}

Given a corpus $\corpus$, rewrite strategy $\rewritestrategy$, and utility function $\util{A}$, the objective of CTS is to find the abstraction $A$ that maximizes the utility $\util{A}$. CTS takes a branch-and-bound approach \cite{Land1960,morrison2016} to this problem, as described in this section.

\para{Expansion} We construct the body of a partial abstraction $\abshole$ through a series of top-down \textit{expansions} starting from the trivial partial abstraction $\hole$. Given a partial abstraction $\abshole$ we can \textit{expand} a hole in this partial abstraction using any production rule from the partial abstraction grammar $\mathcal{G}_{\abshole}$, yielding a new abstraction $\abshole'$. We denote this expansion by $\abshole 
\to \abshole'$. In Figure \ref{fig:corpus_guided}A our overall top-down search is depicted as a series of expansions. We say that a complete abstraction $A$ can be \textit{derived} from a partial abstraction $\abshole$, denoted $\abshole \to^* A$, if $A$ is in the transitive reflexive closure of the expansion operation, i.e. there exists a sequence of expansions from $\abshole$ to $A$. Any abstraction can be derived from $\hole$.

\para{A Na\"ive Approach} The goal of our algorithm is to find the maximum-utility abstraction. One simple, inefficient approach to this is to simply enumerate the \textit{entire} space of abstractions through top down synthesis and return the one with the highest utility. This na\"ive approach maintains a queue of partial abstractions initialized with just $\hole$. At each step of the algorithm it pops an abstraction from the queue, chooses a hole in it and expands that hole using each possible production rule. Whenever an expansion produces a new partial abstraction it pushes it to the queue, and when it produces a complete abstraction it calculates its utility and updates the best abstraction found so far. Since an abstraction cannot match a program that is smaller than it, the algorithm can stop expanding when all expansions would lead to abstractions larger than the largest program in the corpus. This algorithm will enumerate all possible abstractions exactly once each.

\para{Introducing Strict Dominance Pruning} While the na\"ive approach will find the optimal abstraction, it is extremely inefficient. We can improve on this using an idea core to branch-and-bound algorithms: \textit{pruning}. We allow the algorithm to choose to prune a partial abstraction instead of expanding it, in which case it simply discards the abstraction and chooses another from the queue. Of course, to maintain optimality we must be certain that we \textit{never} prune the branch of search containing the optimal abstraction. To formalize safe pruning we will use the notion of \textit{strict dominance} from branch-and-bound literature \cite{morrison2016,Ibaraki1977,Chu2015}, which we explain below in terms of a notion of \textit{covering}.

A complete abstraction $A''$ \textit{covers} another complete abstraction $A'$, written $\text{covers}(A'',A')$, if the utility of $A''$ is greater than that of $A'$. A partial abstraction $\abshole'$ is \textit{strictly dominated by} another partial abstraction $\abshole''$ if and only if every complete abstraction that is derivable from $\abshole'$ is covered by a complete abstraction that is derivable from $\abshole''$.

Formally, $\abshole''$ strictly dominates $\abshole'$ iff $\forall A'.\ \abshole' \to^* A' \implies \exists A''.\ \abshole'' \to^* A'' \land \text{covers}(A'',A')$.

Equipped with this formalism, we claim that it is safe to prune an abstraction $\abshole'$ if we know that there exists some $\abshole''$ which strictly dominates it. Note that $\abshole''$ does not necessarily contain the optimum and may even be pruned in the search if it is strictly dominated by another abstraction. Knowing the existence of $\abshole''$ is enough to enable pruning, regardless of whether it has been pruned or has not yet been enumerated during search.

\begin{lemma}\label{lemma:subsumption-optimal}
 Na\"ive search augmented with strict dominance pruning finds the optimal abstraction.
\end{lemma}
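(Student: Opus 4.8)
The plan is to establish a loop invariant for the branch-and-bound search and read off optimality from it at termination. Fix an optimal abstraction $A^*$, i.e. one maximizing $\util{\cdot}$ over the finite set of abstractions that the unpruned na\"ive search enumerates. The invariant is: after every step of the algorithm, either some abstraction of utility $\util{A^*}$ has already been enumerated and recorded as the best-so-far, or the queue contains a partial abstraction $\abshole$ with $\abshole \to^* A^*$. It holds initially, since the queue is $\{\hole\}$ and $\hole \to^* A^*$. Because the search space is finite the algorithm terminates with an empty queue, at which point the second disjunct fails and the first must hold, so the algorithm returns an abstraction of utility $\util{A^*}$.

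The crucial observation is that a partial abstraction lying on a path to $A^*$ can never be pruned, because it is never strictly dominated: if $\abshole \to^* A^*$, then no partial abstraction $\abshole''$ strictly dominates $\abshole$. Indeed, instantiating the universally quantified $A'$ in the definition of strict dominance with $A^*$ (which is derivable from $\abshole$) would produce a complete abstraction $A''$ with $\abshole'' \to^* A''$ and $\text{covers}(A'',A^*)$, i.e. $\util{A''} > \util{A^*}$, contradicting the optimality of $A^*$. This step hinges on the covering relation using a \emph{strict} inequality; with $\ge$ a second optimal abstraction could cover $A^*$ and the observation would fail, forcing a more delicate argument that carries along the entire set of utility-optimal abstractions.

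Given the observation, the invariant is preserved across each step. Suppose the second disjunct holds via a witness $\abshole$ in the queue, and consider the abstraction the algorithm pops next. If it is not $\abshole$, then $\abshole$ remains in the queue afterwards --- popping, pruning, and expansion affect only the popped element --- so the invariant persists. If it is $\abshole$, then by the observation $\abshole$ is not strictly dominated, so the algorithm cannot prune it and must expand it: it selects some hole $h$ of $\abshole$ and applies every production rule of $\mathcal{G}_{\abshole}$ to $h$. I claim one of the resulting children $\abshole'$ still satisfies $\abshole' \to^* A^*$; this suffices, since either $\abshole'$ is complete, in which case $\abshole' = A^*$ and it is evaluated (establishing the first disjunct), or it has holes and is enqueued (re-establishing the second). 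To find such a child, take any derivation $D_0 \to D_1 \to \cdots \to D_n$ with $D_0 = \abshole$ and $D_n = A^*$. Since $A^*$ has no holes and an expansion at a hole other than $h$ leaves $h$ untouched, $h$ is the hole expanded at some step $D_k \to D_{k+1}$, via some rule $r$. Because expansions at disjoint holes commute, this derivation can be reordered so that the expansion of $h$ with $r$ comes first, giving $\abshole \to \abshole' \to^* A^*$ where $\abshole'$ is exactly the child produced by applying $r$ to $h$.

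The step I expect to be the main obstacle is this commutativity argument: one must check that expanding hole $h$ and expanding any other hole $h'$ operate on disjoint subtrees and hence can be swapped, that every intermediate term in the reordered derivation remains a legal partial abstraction, and then iterate the swap to pull the $h$-expansion to the front of an arbitrary derivation. The remaining ingredients --- finiteness and hence termination of the search, the fact that queue operations touch only the popped element, and the reduction of the lemma to the invariant --- are routine. I would also note that this invariant-based template is exactly what should be reused when the additional pruning rules promised later are introduced: it will suffice to re-verify that each such rule likewise never discards a partial abstraction lying on a path to $A^*$.
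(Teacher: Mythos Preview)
Your proof is correct and hinges on the same core contradiction as the paper: any partial abstraction deriving $A^*$ cannot be strictly dominated, since strict covering would produce an abstraction with utility exceeding $\util{A^*}$. The paper's argument is organized slightly differently---it phrases the invariant as ``the optimum has not yet been pruned'' and leans on the earlier informal claim that the unpruned na\"ive search enumerates every abstraction, whereas you fold that completeness into the invariant itself and spell out the hole-expansion commutativity the paper leaves implicit; your version is more self-contained but the underlying idea is the same.
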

\begin{proof}
We proceed with a proof by induction. We seek to prove the equivalent statement that the optimum is never pruned and thus it will be enumerated by the search.

Our inductive hypothesis is that the optimum has not yet been pruned. In the base case this is trivially true, since no pruning has taken place yet. In the inductive step we must prove that a step of pruning maintains the validity of the inductive hypothesis.
Recall that a step of pruning will discard some partial abstraction $\abshole'$ which is strictly dominated by another abstraction $\abshole''$.
Suppose then for the sake of contradiction that we did prune the optimum in this step; then the optimum must have been derivable from $\abshole'$. By strict dominance we know that all abstractions derivable from $\abshole'$ must be covered by some abstraction derivable from $\abshole''$. Hence, there must exist some abstraction that covers the optimum. However, since the utility of the optimum is greater than or equal to that of all other abstractions, there is by definition \textit{no} such abstraction; we have thus arrived at a contradiction. 
\vspace{-2pt}
\end{proof}
This lemma ensures the safety of composing together pruning strategies, as long as they all are forms of strict dominance pruning, since no instance of strict dominance pruning will remove the optimum. Determining which partial abstractions are strictly dominated by which others is specific to the utility function being used, and in Section \ref{sec:framework-instantiation} we identify two instances of strict dominance used when instantiating this framework for a compression-based utility.

\para{Upper Bound Pruning}
We further improve this algorithm to employ upper bound based pruning, in which we bound the maximum utility that can be obtained in a branch of search, and discard the branch if we have previously found a complete abstraction with higher utility than this bound. This is the most common form of pruning used in branch-and-bound algorithms (see \citealt{morrison2016} Section 5.1 for a review). Our algorithm is generic over the upper bound function $\upperbound{\abshole}$ which upper bounds the utility of \textit{any complete abstraction} $A$ that can be derived from  $\abshole$.

Formally, the bound must satisfy $\forall A.\ \abshole \to^* A \implies \util{A} \leq \upperbound{\abshole}$. While this could trivially be satisfied with $\upperbound{\abshole} = \infty$ for any choice of a utility function, a tighter bound will allow for more pruning. The soundness of upper bound pruning in maintaining the optimality of the solution trivially follows from the fact that the pruned branch only contains complete abstractions that are at most as high-utility as the best abstraction found so far.

While we defer to \cref{sec:framework-instantiation} to construct the upper bound used by \stitch, it is worth mentioning a key insight that helps in constructing tight bounds for many utility functions. When bounding the utility of $\abshole$, it is useful to bound the set of possible locations where rewrites can occur for any abstraction $A$ derived from $\abshole$. The following lemma provides such a bound:

\begin{lemma}\label{lemma:matches-subset-rewrites}
$\matches{\abshole}$ is an upper bound on the set of locations where rewrites can occur in any $A$ derived from $\abshole$.
\end{lemma}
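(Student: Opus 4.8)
The plan is to prove the two-part claim: (i) every rewrite location of a complete abstraction $A$ derived from $\abshole$ is a match location of $A$, and (ii) every match location of $A$ is a match location of $\abshole$. Composing these gives that $\matches{\abshole}$ contains every rewrite location of any such $A$, which is exactly the asserted upper bound. Part (i) is essentially by definition: a rewrite strategy $\rewritestrategy$ selects $\rewritelocations{A} \subseteq \matches{A}$, so there is nothing to do beyond citing the definition of a rewrite strategy from the Problem Setup. The substance is in part (ii), the \emph{subset observation} informally described in the Overview: expanding a hole only makes an abstraction more precise, so it can only match in fewer places.

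For part (ii), I would proceed by induction on the length of the expansion sequence $\abshole \to^* A$, and it suffices to prove the single-step statement $\abshole \to \abshole' \implies \matches{\abshole'} \subseteq \matches{\abshole}$, then chain it. Fix a match location $(C,e) \in \matches{\abshole'}$; by definition $p = C[e]$ for some $p \in \corpus$ and $\lambdaU(e,\abshole')$ succeeds (with no $\&i$ indices bound to abstraction variables). I must show $\lambdaU(e,\abshole)$ also succeeds at this location. The key structural fact is that $\abshole'$ is obtained from $\abshole$ by replacing a single hole $\hole_i$ with a term $A_{\text{new}}$ drawn from a production of $\mathcal{G}_{\abshole}$ whose own immediate children (if any) are fresh holes. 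So I would argue that a successful run of $\lambdaU(e,\abshole')$ can be "truncated" at the position of $\hole_i$: wherever the derivation descends into the subterm $A_{\text{new}}$ of $\abshole'$ and succeeds, replacing that entire subderivation by a single application of \textsc{U-Hole} at $\hole_i$ still succeeds, binding $\hole_i$ to the corresponding subexpression of $e$. Concretely, if $e = C'[e_0]$ where $C'$ is the context in $e$ aligned with the path to $\hole_i$, then $\lambdaU(e, \abshole)$ runs the same rules as $\lambdaU(e,\abshole')$ everywhere outside that path and fires \textsc{U-Hole} where $\abshole'$ fired the subderivation on $A_{\text{new}}$; the resulting mapping is the restriction of the $\abshole'$-mapping to the shared holes/variables, augmented with $\hole_i \to (\text{downshifted } e_0)$. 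One must check the \textsc{DownshiftAll} bookkeeping in \textsc{U-Lam} is consistent: since the same lambdas are traversed on the shared spine in both runs, the downshift counts applied to shared bindings are identical, and the new binding for $\hole_i$ absorbs exactly the downshifts that $A_{\text{new}}$'s subterms absorbed (recalling that holes may carry $\&i$ indices, so no \textsc{DownshiftAll} failure arises from $\hole_i$). Hence $\lambdaU(e,\abshole)$ succeeds, and since the discarded-location condition (no $\&i$ under abstraction variables) concerns only abstraction-variable bindings, which are a subset of those in the $\abshole'$ run, the location is not discarded for $\abshole$ either; thus $(C,e) \in \matches{\abshole}$.

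The main obstacle I anticipate is making the "truncate the unification derivation at $\hole_i$" argument fully rigorous in the presence of de Bruijn index arithmetic: one has to track that the composition of downshifts along the path to $\hole_i$ behaves the same way whether the path terminates in a hole (in $\abshole$) or continues into $A_{\text{new}}$ (in $\abshole'$), and that the modified $\downarrow$ rules for $\&i$ don't cause a mismatch. I expect this follows cleanly once stated as a lemma about \lambdaU\ restricted to a subtree — essentially a compositionality property of \lambdaU\ under contexts — which can be read off from the inductive structure of the inference rules in \cref{fig:lambda_unify}, and which dovetails with the correctness proof of \lambdaU\ already deferred to Appendix B. Finally I would remark that the bound $\matches{\abshole}$ is generally not tight: $A$ may additionally fail to match at some locations of $\abshole$ because a later expansion forced an abstraction variable to take inconsistent values or introduced an $\&i$ under an abstraction variable, but this only strengthens its use as an upper bound.
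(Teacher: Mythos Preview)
Your proposal is correct and follows the same two-part decomposition the paper uses: rewrites happen at a subset of match locations (by definition of a rewrite strategy), and match locations only shrink under expansion (the subset observation). The paper's own proof is far terser than yours---it simply asserts the subset observation as a fact without the inductive truncation argument you sketch---so your elaboration of how a successful \lambdaU\ derivation for $\abshole'$ can be collapsed to one for $\abshole$ via \textsc{U-Hole} is additional rigor beyond what the paper provides, not a different route.
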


This follows from the fact that as partial abstractions are expanded they become more precise and thus match at a subset of the locations. Since rewriting only happens at a subset of match locations, $\matches{\abshole}$ serves as a bound on the set of locations where rewrites can occur.

An important consequence of \cref{lemma:matches-subset-rewrites} is that if a partial abstraction matches at zero locations, then all abstractions derived from it will have zero rewriting locations, so no rewriting can occur. Such branches can therefore be safely pruned.

\para{Improving the Search Order}

Finally, without sacrificing the optimality of this algorithm we can heuristically guide the order in which the space of abstractions is explored. The queue of partial abstractions used in top-down search can be replaced with a priority queue ordered by the upper bound. This way, the algorithm will first explore more promising, higher-bound branches of search, narrowing in on the optimal abstraction more quickly. In branch-and-bound literature this choice of a search order that uses a sound upper bound is sometimes referred to as \textit{best-bound} search and makes the algorithm a form of A$^*$ search \cite{Hart1968} since the upper bound is an admissible heuristic \cite{morrison2016}.

In preliminary experiments on a subset of benchmarks we found that search order had little effect on the overall runtime of the algorithm, but the best-bound ordering was moderately helpful in more quickly narrowing in on the optimal solution (i.e. useful when running the algorithm with a limited time budget), so this is used for all experiments.

\para{Efficient Incremental Matching} 
When expanding an abstraction $\abshole$ to a new abstraction $\abshole'$, it is easy to compute $\matches{\abshole'}$ since we know it is a subset of $\matches{\abshole}$. Thus, there is no need to perform matching from scratch against every subtree in the corpus to compute the match locations. Instead, we can simply inspect the relevant subtree at each match location of the original abstraction to see which match locations will be preserved by a given expansion. In fact, except when expanding into an abstraction variable $\alpha$, the sets of match locations obtained by different expansions of a hole will be disjoint.

When expanding to an abstraction variable $\alpha$, if $\alpha$ is an existing abstraction variable then this is a situation where the same variable is being used in more than one place, as in the \textit{square} abstraction $(\lambda \alpha.\ \texttt{*}\ \alpha\  \alpha)$. In this case we restrict the match locations to the subset of locations where within the location all instances of $\alpha$ are bound to syntactically identical subtrees.

Additionally, if the user provides a maximum arity limit, then an expansion that causes the abstraction to exceed this limit can be discarded.

\para{Algorithm Summary and Key Points}
The CTS algorithm combines all of the aforementioned optimizations in a top-down search algorithm with pruning. In summary, CTS takes as input a corpus $\corpus$, a rewrite strategy $\rewritestrategy$, a utility function $\util{A}$, and an upper bound function $\upperbound{\abshole}$, and returns the \textit{optimal abstraction} with respect to the utility function. CTS performs a top-down search over partial abstractions and prunes branches of search when partial abstractions match at zero locations or are eliminated by upper bound pruning or strict dominance pruning. CTS is made further efficient through the aforementioned \textit{search-order heuristic} and \textit{efficient incremental matching}. Importantly, none of these optimizations sacrifice the optimality of the abstraction found --- all pruning is done soundly, as discussed in the prior sections on upper bound pruning and strict dominance pruning.

We also note that CTS is amenable to parallelization without losing optimality. This can be implemented using a shared priority queue that is safely accessed by different worker threads through a locking primitive. Strict dominance pruning trivially remains sound as it doesn't depend on any global information about the state of the search. Upper bound pruning also remains sound even if workers only occasionally synchronize their best-abstraction-so-far, as this will just mean that they occasionally have weaker upper bounds which are still sound.

Since the algorithm maintains a best abstraction so-far, it can also be terminated early, making it an \textit{anytime algorithm}. Finally, to learn a library of abstractions, CTS can be run repeatedly (much like DreamCoder), adding one abstraction at a time to the library and rewriting the corpus with each abstraction as it is learned before running CTS again. A listing of the full algorithm is provided in Appendix A. 

\section{Applying Corpus-Guided Top-Down Search to Compression}\label{sec:framework-instantiation}

Having presented the general framework and algorithm of corpus-guided top-down search, we now instantiate this framework for optimizing a \textit{compression} metric.

\subsection{Utility}\label{sec:utility}

In compression we seek to minimize some measure of the size, or more generally the \textit{cost}, of a corpus of programs after rewriting them with a new abstraction. As in prior work \citep{ellis2021dreamcoder} we penalize large abstractions by including abstraction size in the utility. The compressive utility function for a corpus $\corpus$ and abstraction $A$ is given below.

\begin{equation}\label{eqn:utility-1}
 \util{A} \triangleq - \cost{A} + \cost{\corpus} - \cost{\rewrite{A}}
\end{equation}

Here, $\cost{\cdot}$ is a cost function of the following form:

\begin{equation} \label{eqn:cost}
    \begin{aligned}
    &\cost{\lambda.\ e'} = \costlam + \cost{e'}\\
    &\cost{e' e''} = \costapp + \cost{e'} + \text{cost}(e'')\\
    &\cost{\$i} = \costvar\\
    &\cost{t} = \costt{t},\text{ for } t \in \gsym\\
    &\cost{\alpha} = \costabs\\
    \end{aligned}
\end{equation}

where $\costlam$, $\costapp$, $\costvar$, and $\costabs$ are non-negative constants, and $\costt{t}$ is a mapping from grammar primitives to their (non-negative) costs. Finally, we introduce $\coststar{\cdot}$ as a version of $\cost{\cdot}$ where $\costabs = 0$.

We can equivalently construct the utility in \cref{eqn:utility-1} by summing over the compression gained from performing each rewrite separately, as given below.

\begin{equation}\label{eqn:util-decomposed}
 \util{A} = - \cost{A} + \sum_{e \in \rewritelocations{A}}{\cost{e} - \cost{\rewriteone{e}{A}}}
\end{equation}

By reasoning about the way that rewriting transforms a program, this utility can be broken down even further:
\begin{equation}\label{eqn:uglobal}
 \util{A} = - \cost{A} + \sum_{e \in \rewritelocations{A}}{U_{local}(A,e)}
\end{equation}
\begin{equation}
 U_{local}(A,e) = \!\!  \underbrace{\coststar{A}
    \vphantom{\sum_{\alpha \in \text{a}(A)}} 
    }_{\text{abstraction size}} -
    \underbrace{( \costt{t_A} + \costapp \cdot \text{arity}(A))
    \vphantom{\sum_{\alpha \in \text{a}(A)}} 
    }_{\text{application utility (negative)}}
    + \underbrace{\sum_{[\alpha \to e'] \in \text{args}(A,e)}{
    \!\!\!\!\!\!\!\!\!\!\!
    (\text{usages}(\alpha) - 1) \cdot \text{cost}(e')}}_{\text{multiuse utility}}
\label{eqn:ulocal}
\end{equation}

where $t_A$ is the new primitive corresponding to abstraction $A$. This form of the utility function can be efficiently computed without explicitly performing any rewrites; and it sheds light on the different sources of compression. The three main terms in this expression are (1) the \textit{abstraction size} that comes from the shared structure that is removed, (2) the negative \textit{application utility} that comes from introducing the new primitive and lambda calculus \texttt{app} nonterminals to apply it to each argument, and (3) the \textit{multiuse utility} which comes from only needing to pass in a single copy of an argument that might be use in more than one place in the body. We emphasize that this form of the utility function is equivalent to the original definition based on rewriting given in \cref{eqn:utility-1}.

\subsection{Upper Bounding the Utility}

We seek an upper bound function $\upperbound{\abshole}$ such that for any $A$ derived from $\abshole$, $\upperbound{\abshole} \geq \util{A}$. We begin from the decomposition of the utility function given in \cref{eqn:util-decomposed}. Since costs are always non-negative, we can upper bound this by dropping the $-\text{cost}(A)$ term as well as the negative term within the sum:

\begin{equation}
    \util{A} \leq \sum_{e \in \rewritelocations{A}}{\text{cost}(e)}
\end{equation}

Intuitively, dropping the negative term from the sum is equivalent to assuming we compressed the cost of this location all the way down to cost 0. We can also bound $\rewritelocations{A}$ as $\matches{\abshole}$ using \cref{lemma:matches-subset-rewrites}, yielding our final upper bound in terms of $\abshole$:

\begin{equation}
    \upperbound{\abshole} \triangleq \sum_{e \in \matches{\abshole}}{\text{cost}(e)}
\end{equation}

\subsection{Strict Dominance Pruning}\label{subsec:strict-dom-compression}
We identify two forms of strict dominance pruning compatible with a compressive utility. The first is \textit{redundant argument elimination}: a partial abstraction can be dropped if it has two abstraction variables that always take the same argument as each other across all match locations, for example if it had variables $\alpha$ and $\beta$ and $\alpha = \beta = \texttt{(+ 3 5)}$ at one location, $\alpha = \beta = \texttt{2}$ at another location, and so on for all match locations. This abstraction would then be strictly dominated by the abstraction that doesn't take $\beta$ as an argument and instead reuses $\alpha$ in place of $\beta$, and can hence be eliminated. Thus an abstraction like \texttt{(+ (* $\alpha$ $\beta$) \hole)} is strictly dominated by an abstraction like \texttt{(+ (* $\alpha$ $\alpha$) \hole)} if $\alpha = \beta$ across all match locations in the former abstraction.

The second is \textit{argument capture}: when a partial abstraction takes the same argument for at least one abstraction variable across all match locations, this abstraction can be discarded. This is because every abstraction derivable is covered by another abstraction which is identical except it has the argument inlined into the body. For example, if in the abstraction \texttt{(+ (* $\alpha$ $\beta$) ??)} the abstraction variable $\alpha$ takes the same argument $\alpha=$\texttt{(+ 3 5)} across every match location, there is a strictly dominating partial abstraction that simply has \texttt{(+ 3 5)} inlined in its body: \texttt{(+ (* (+ 3 5) $\beta$) ??)}. Note that argument capture does not apply when the argument contains a free variable, as inlining would result in an invalid abstraction.

A close reader might notice that if the number of times $\alpha$ appears in $A$ is greater than the number of rewrite locations, the abstraction with argument capture applied, $A'$, will have slightly lower utility. Employing this pruning rule in our search means that we will find the optimal abstraction subject to the constraint that all possible argument captures have occurred. This utility difference from the optimal abstraction without argument capture is bounded by at most $\cost{A'}$, since the difference comes from the contribution of the size of the abstraction body itself to the utility.
\footnote{
Specifically, for some $[\alpha \to e]$, the abstraction without argument capture is higher in utility by $- (\cost{e} + \costapp) * |\rewritelocations{A}| + (\cost{e} - \costabs) * \text{usages}(\alpha)$. For a cost function where $\costapp < \costabs$, this is positive when the argument is used more times than the abstraction itself is used.
}
In preliminary experiments across all of our experimental datasets, we never find this edge case to change which abstraction is chosen as optimal.

\subsection{Rewrite Strategy}

\stitch\ employs a linear-time, optimal rewrite strategy for a compression objective. The goal of an optimal rewrite strategy is to efficiently select the optimal subset of $\matches{A}$ to perform rewrites at in order to maximize the utility. 

The main challenge is that when match locations overlap, the strategy must decide which of the two to accept. For example, consider the program \texttt{(foo (foo (foo bar))} and the abstraction \texttt{$t_A = \lambda \alpha.\ $(foo (foo $\alpha$))}. This abstraction matches at the root of the program with $\alpha = $ \texttt{(foo bar)}, resulting in the rewritten program \texttt{($t_A$ (foo bar))}. However, though the abstraction also matched at the subtree \texttt{(foo (foo bar)} with $\alpha = $ \texttt{bar} in the original program, this match location is no longer present in the rewritten program, so only one of the two locations can be chosen by the rewrite strategy.

Our approach is a bottom-up dynamic programming algorithm, which begins at the leaves of the program and proceeds upwards. At each node $e$, we compute the cumulative utility so far if we reject a rewrite here ($\text{util}_R$), if we accept a rewrite here ($\text{util}_A$), or if we choose the better of the two options ($\text{util}^*$):

\begin{equation}
    \begin{aligned}
    &\text{util}_R[e] = \sum_{e' \in children(e)}{\text{util}^*[e']}\\
    &\text{util}_A[e] = \left\{
        \begin{array}{ll}
        0 & \text{if } e \notin \matches{A}\\
        U_{local}(A,e) + \sum\limits_{e' \in \text{args}(A,e)}{\text{util}^*[e']} & \text{otherwise}\\
        \end{array}\right. \\ 
    &\text{util}^*[e] = \text{max}(\text{util}_R[e],\text{util}_A[e])
    \end{aligned}\label{eqn:dynamic-programming}
\end{equation}

where $U_{local}(A,e)$ is the utility gained from a single rewrite as defined in \cref{eqn:ulocal}. After calculating these quantities the rewrite strategy can start from the program root and recurse down the tree, rewriting at each node where rewriting is optimal (i.e. $\text{util}_A > \text{util}_R$) and then recursing into its arguments after rewriting. Since all arguments were originally subtrees in the program, these quantities will have been calculated for all of them, with the caveat that their de Bruijn indices may have been shifted. Shifting indices of a subtree does not change whether an abstraction can match there nor does it change the utility gained from using that abstraction, so this simply requires some extra bookkeeping to track.

This algorithm is optimal by a simple inductive argument: at each step of the dynamic programming problem, we can assume that we know the cumulative utility of all children and potential arguments at this node, so we can use \cref{eqn:dynamic-programming} to calculate the cumulative utility of either rejecting or accepting the rewrite at this node.

\section{Combining corpus-guided top-down search with deductive approaches}
\label{subsec:hybridapproach}

In complex domains, assembling good libraries may involve more than just finding matching code-templates; sometimes, some initial refactoring is necessary to expose common structure.
For example, consider learning the abstraction $\lambda \alpha.\ \texttt{(* 2 }\alpha\texttt{)}$ for doubling integers, given the expressions \texttt{(* 2 8)}, \texttt{(* 7 2)}, and  \texttt{(right-shift 3 1)}.
This is only possible if the system can use the commutativity of multiplication to rewrite \texttt{(* 7 2)} into \texttt{(* 2 7)}, and bitvector properties to rewrite \texttt{(right-shift 3 1)} into \texttt{(* 2 3)}; such rewrites are not natively supported by CTS.

In this section, we discuss how CTS can be combined with refactoring systems based on deductive rewrites to increase its expressivity further.
The core idea is simple: run a rewrite system on the corpus to produce a set of refactorings of the corpus in a version space, then run CTS over the resulting data structure.\footnote{Note that while we have previously presented CTS as operating over syntax trees, the core notions of upper bounds and matching that CTS operates on are not restricted to program trees.} 
Intuitively, this will lead to improved performance compared with a purely deductive approach since the cost of rewriting grows exponentially with the number iterations of rewrites that are applied. This is a problem for fully deductive approaches like \citealt{ellis2021dreamcoder} because extracting the abstractions often requires a long chain of rewrites, especially for higher-arity abstractions. However, a small number of rewrites is typically sufficient to expose the underlying commonalities, as it was in the example above; performing only a handful of rewrites and then using CTS to actually extract the library thus avoids the exponential blow-up of computing several rewrites in sequence, while still benefiting from the increase in expressivity afforded by the rewrites.

\para{Version Spaces} To illustrate this hybrid approach, we combine CTS with \emph{version spaces}~\cite{lau2003programming,polozov2015flashmeta} representing sets of programs semantically equivalent under the $\beta$-inversion rewrite. Version spaces are represented as terms from a grammar obtained by extending the grammar of expressions with the union ($\uplus$) operator that represents a set of equivalent expressions:
$v \Coloneqq \varnothing\mid v'\ \uplus\ v''\mid \lambda.\ v' \mid v'\ v'' \mid \$i \mid t$.
We define a denotation operator, $\eval{v}$, mapping a version space $v$ to a set of terms:
for the union operator, $\eval{v\uplus v'}=\eval{v}\cup\eval{v'}$;
for applications, $\eval{v\ v'}=\left\{ e\ e'\,:\,\forall e,e'\in \eval{v}\times\eval{v'}\right\}$;
for lambda abstractions, $\eval{\lambda v}=\left\{ \lambda e\ \,:\,\forall e\in \eval{v}\right\}$;
for the empty set, $\eval{\varnothing}=\varnothing$;
and for de Bruijn indices and terminals, $\eval{v}=v$.
\citealt{ellis2021dreamcoder} gives a procedure called $I\beta(e)$, which take as input an expression $e$ and outputs a version space $v$ inverting one step of $\beta$-reduction: that is, $e'\to^\beta e$ iff $e'\in \eval{I\beta(e)}$.

\begin{figure}
\centering\footnotesize
    \begin{mathpar}
    \inferrule*[left=RU-AbsVar]
        { }
        {\textsc{RewriteUnify}(\alpha,v) \rightsquigarrow [ \alpha \to v ]}
    \and
    \inferrule*[left=RU-Hole]
        { }
        {\textsc{RewriteUnify}(\hole_i,v) \rightsquigarrow [ \hole_i \to v ]}
    \and
    \inferrule*[left=RU-App]
        {  \textsc{RewriteUnify}(A_1,v_1) \rightsquigarrow l_1 \\ \textsc{RewriteUnify}(A_2,v_2) \rightsquigarrow l_2 \\ l = \textsc{vmerge}(l_1,l_2) }
        {\textsc{RewriteUnify}((A_1\ A_2),(v_1\ v_2)) \rightsquigarrow l}
    \and
    \inferrule*[left=RU-Lam]
        {
            \textsc{RewriteUnify}(A,v) \rightsquigarrow l' \\
            l = \textsc{DownshiftAll}(l')
        }
        {\textsc{RewriteUnify}((\lambda.\ A),(\lambda.\ v)) \rightsquigarrow l}
    \and
    \inferrule*[left=RU-Same]
        { }
        {\textsc{RewriteUnify}(v,v) \rightsquigarrow [ ]}
    \and \inferrule*[left=RU-Union]
        {\textsc{RewriteUnify}(A,v_1) \rightsquigarrow l}
        {\textsc{RewriteUnify}(A,v_1\uplus v_2) \rightsquigarrow l}
    \\\and
    \inferrule*[left=VM-Same]
        { [\alpha\to v_1]\in l_1\qquad[\alpha\to v_2]\in l_2}
        {[\alpha\to (v_1\cap v_2)]\in \textsc{vmerge}(l_1,l_2)}
    \and
    \inferrule*[left=VM-Diff]
        { [\alpha\to v]\in l_1\\\alpha\not\in \text{args}(l_2)}
        {[\alpha\to v]\in \textsc{vmerge}(l_1,l_2)}
    \end{mathpar}
    \caption{Defining \textsc{RewriteUnify}, which takes as input an abstraction and a version space of possible refactorings, and yields multiple substitutions corresponding to all the ways that the abstraction's holes and variables can match with programs encoded by the version space.}\label{fig:rewriteunify}
\end{figure}

To run CTS on top of this rewriting system requires a generalization of \textsc{LambdaUnify}: instead of unifying an abstraction with a term, yielding a single substitution, we unify against a \emph{set of terms} (a version space), yielding a set of candidate substitutions.
The relation \textsc{RewriteUnify} (\cref{fig:rewriteunify}) accomplishes this, and equipped with this relation we can express the size of a subtree $e$ after expanding it into the version space $I\beta(e)$ and rewriting it with abstraction $A$ as:
\begin{equation}
\text{cost}(\textsc{Rewrite}(A,e))=\min_{l:\; \textsc{RewriteUnify}(A,I\beta(e)) \rightsquigarrow l}
  \costt{t_A} + \costapp \cdot \text{arity}(A)
    +\sum_{v\in \text{args}(l)} \min_{e'\in \eval{v}} \text{cost}(e')
\end{equation}

We then can approximate the utility of rewriting a corpus (based on \cref{eqn:util-decomposed}) as follows 

\begin{equation}
 \util{A} \approx -\cost{A} + \sum_{p \in \corpus}{\; \max_{e \in \text{subtrees}(p)}{ \cost{e} - \text{cost}(\textsc{Rewrite}(A,e)) } }
\end{equation}

This utility is exact when the optimal way of rewriting the corpus using $A$ has at most one rewrite location per program. This is because it considers the utility of the single best site at which to perform the rewrite instead of considering multiple simultaneous rewrites at different locations within a single program. We can bound this approximate utility for a partial abstraction by computing the approximate utility directly while treating any version space bound to a hole in $\text{args}(l)$ as having zero cost.

\para{Example: Learning map from fold.}
Consider learning the higher-order function 
\texttt{map $= \lambda\alpha.$(fold ($\lambda x.\lambda\ell.$ (cons ($\alpha$ $x$) $\ell$)))} 
from two example programs: doubling a list of numbers, expressed as  
\texttt{(fold ($\lambda x.\lambda\ell.$ (cons (+ $x$ $x$) $\ell$)))},
and decrementing a list of numbers, expressed as 
\texttt{(fold ($\lambda x.\lambda\ell.$ (cons (- $x$ 1) $\ell$)))}.
Matching these programs with the \texttt{map} function requires re-expressing them as 
\texttt{(fold ($\lambda x.\lambda\ell.$ (cons (($\lambda z.$ (+ z z)) $x$) $\ell$)))}
and 
\texttt{(fold ($\lambda x.\lambda\ell.$ (cons (($\lambda z.$ (- z 1)) $x$) $\ell$)))},
which matches the \texttt{map} function with the substitutions $\alpha = \texttt{($\lambda z.$ (+ z z))}$ and $\alpha = \texttt{($\lambda z.$ (- z 1))}$ respectively.
These two re-expressions are performed by running a single round of $\beta$-inversion as a deductive rewriting step.

While the majority of our experiments will focus on evaluating the approach laid out in \cref{sec:technical} and \cref{sec:framework-instantiation}, we implement and evaluate a prototype of integrating version spaces in this way in \cref{subsec:claim-5}.

\section{Experiments}\label{sec:experiments}

In this section, we evaluate corpus-guided top-down search for library learning. Specifically, our evaluation focuses on five hypotheses about the performance of \stitch{}: 
\begin{enumerate}
    \item \textit{\stitch\ learns libraries of comparable quality to those found by existing \textit{deductive library learning} algorithms in prior work, while requiring significantly less resources.} In \cref{subsec:claim-1} we run \stitch\ on the library learning tasks from \cite{ellis2021dreamcoder} and directly compare \stitch\ to DreamCoder, the deductive algorithm introduced in that work. We find that \stitch\ learns libraries which usually match or exceed the baseline in quality (measured via a compression metric), while improving the resource efficiency in terms of memory usage and runtime by 2 and 3-4 orders of magnitude compared to the baseline (respectively).
    
    \item \textit{\stitch\ scales to corpora of programs that contain more and longer programs than would be tractable with prior work.} In \cref{subsec:claim-2}, we evaluate \stitch's ability to learn libraries within eight graphics domains from \citealt{cogsci}, which are considerably larger and more complex than have been considered in previous work. We find that \stitch\ on average obtains a test set compression ratio of 2.55x-11.57x in 0.19s-60.16s, with a peak memory usage of 11.21MB-714.55MB in these domains. The problems are large enough to time out with the DreamCoder baseline. 
    
    \item \textit{\stitch\ degrades gracefully when resource-constrained.} In \cref{subsec:claim-3} we investigate \stitch's performance when run as an \textit{anytime} algorithm; i.e., one that can be terminated early for a best-effort result if a corpus is too large or there are limits on time or memory. We reuse the eight graphics domains from \citealt{cogsci} and find that with its heuristic guidance \stitch\ converges upon a set of high quality abstractions very early in search, doing so within 1\% of the total search time in 3 out of 8 domains and within 10\% in all except one.
    
    \item \textit{All the elements of \stitch\ matter.}  In \cref{subsec:claim-4}, we carry out an ablation study on \stitch\ and find the \textit{argument capturing} and \textit{upper bound} pruning methods are essential to its performance, while \textit{redundant argument elimination} also proves useful in certain domains. With all optimizations disabled, we find that \stitch\ cannot run in $\leq 90$ minutes and $\leq 50$GB of RAM on any of the domains from \citealt{cogsci}.
    
    \item \textit{\stitch\ is complementary to deductive rewrite-based approaches to library learning}. These prior experiments show the superior runtime performance of \stitch\ relative to deductive rewrite systems, but deductive systems have an important advantage over \stitch{}: the ability to incorporate arbitrary rewrite rules to expose more commonality among different programs and in that way discover better libraries.  Such deductive approaches are especially more apt at learning higher-order abstractions.
    In \cref{subsec:claim-5}, we give evidence that this expressivity gap can be reduced by running \stitch\ on top of a deductive rewrite system, allowing it to learn many new abstractions while still using $<$2\% of the compute time.

\end{enumerate}

For all experiments, we parameterize \stitch's $\cost{e}$ function (as defined in \cref{sec:utility}) as follows: $\costapp = \costlam = 1$, $\costvar = \costabs = \costt{t} = 100$. To avoid overfitting, DreamCoder prunes the abstractions that are only useful in programs from a single task. We add this to \stitch\ as well, treating each program as a separate task for datasets that don't divide programs into tasks.

We run all experiments on a machine with two AMD EPYC 7302 processors, 64 CPUs, and 256GB of RAM. We note however that \stitch\ itself runs exceptionally well on a more average machine. For example, on one author's laptop (ThinkPad X1 Carbon Gen 8), the experiments from \cref{subsec:claim-2} can be replicated with nearly identical runtimes, and these are the most computationally intensive experiments outside of the ablation study.

\subsection{Iterative Bootstrapped Library Learning}
\label{subsec:claim-1}

\subsubsection*{Experimental setup}
Our first experiment is designed to replicate the experiments in DreamCoder, which is the state-of-the-art in deductive library learning. DreamCoder learns libraries iteratively: the system is initialized with a low-level DSL, and then alternates between synthesizing programs (via a neurally-guided enumerative search) that solve a training corpus of inductive tasks and updating the library of abstractions available to the synthesizer. Traces from the experiments carried out by \citealt{ellis2021dreamcoder} are publicly available\footnote{\url{https://github.com/mlb2251/compression_benchmark}} and include all of the intermediate programs that were synthesized as well as the libraries learned from those programs.

In this experiment, we take these traces and evaluate \stitch\ on each instance where library learning was performed, comparing the quality of the resulting library to the original one found by DreamCoder. We also re-run DreamCoder on these same benchmarks in order to evaluate its resource usage, capturing its runtime and memory usage in the same environment as  \stitch.

The library learning algorithm in \citealt{ellis2021dreamcoder} implements a stopping criterion to determine how \textit{many} abstractions to retain on any given set of training programs. In our comparative experiments, we run the DreamCoder baseline first, and then match the number of abstractions learned by \stitch\ at each iteration to those learned by the baseline under its stopping criterion so that timing comparisons are fair. We record the total time spent both performing abstraction learning and rewriting for both \stitch\ and DreamCoder.

We replicate experiments on five distinct domains from \citealt{ellis2021dreamcoder}:
\begin{itemize}
    \item \textbf{Lists}: A functional programming domain consisting of 108 total inductive tasks.
    \item \textbf{Text}: A string editing domain in the style of FlashFill \cite{gulwani2015inductive} consisting of 128 total inductive tasks.
    \item \textbf{LOGO}: A graphics domain consisting of 80 total inductive tasks.
    \item \textbf{Towers}: A block-tower construction domain consisting of 56 total inductive tasks.
    \item \textbf{Physics}: A domain for learning equations corresponding to physical laws from observations of simulated data, consisting of 60 total inductive tasks.
\end{itemize}

\subsubsection*{Assessing library quality with a compression metric}\label{sec:metrics}

The standard \stitch{} configuration optimizes a compression metric that minimizes the size of the programs after being rewritten to use the abstraction. This is a standard metric in program synthesis, since shorter programs are frequently easier to synthesize. Optimizing against this metric is equivalent to maximizing the likelihood of the rewritten programs under a uniform PCFG.  

The DreamCoder synthesizer is more sophisticated than simple enumeration; it takes as input a learned typed bigram PCFG and leverages it to synthesize programs more efficiently. When performing compression, it optimizes against this given PCFG in order to find abstractions that will be more profitable for its specific synthesizer. For the purpose of this evaluation, however, we restrict ourselves to the uniform PCFG because the one used by DreamCoder requires programs to be in a particular normal form. 
Another aspect of DreamCoder relevant to its compression metric is that DreamCoder synthesizes multiple programs that solve the same task and then selects the abstraction that works best on \emph{some} program for a given task. This is expressed formally in the equation below. 
\begin{equation}\label{eqn:min-metric}
    \text{cost}(A)+\sum\limits_{\text{task}}\min\limits_{p \in \text{task}}\text{cost}\left( \textsc{Rewrite}(p,A) \right)
\end{equation}
It is trivial to implement this best-of-task metric in \stitch{}, so we use this for the comparisons with DreamCoder.

\hfill\\ \vspace{-8pt}
\begin{figure}[t]
    \centering
    \includegraphics[width=.6\textwidth]{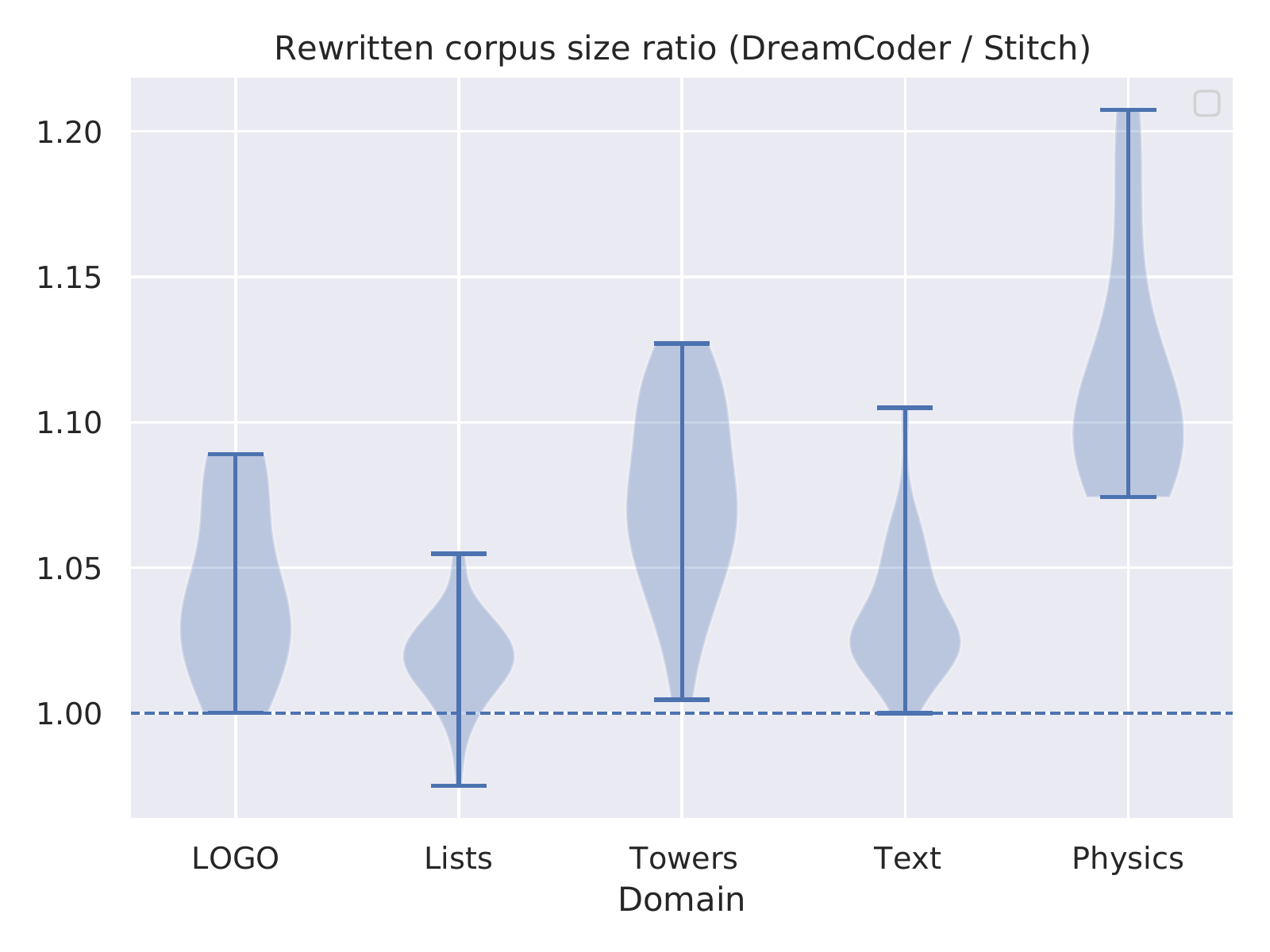}
    \caption{Compression rates obtained when running \stitch\ on the five domains considered in \ref{subsec:claim-1} relative to those of DreamCoder. Higher is better for \stitch: a ratio above 1.0 indicates that \stitch\ achieves greater compression than DreamCoder. The specific compression metric used in the ratio is given in \cref{eqn:min-metric}.}
    \label{fig:casestudy1-compression_ratio}
\end{figure}

\begin{figure}
  \centering
    \begin{minipage}{.45\textwidth}
      \centering
      \vspace{-4pt}
      \includegraphics[width=\textwidth,keepaspectratio]{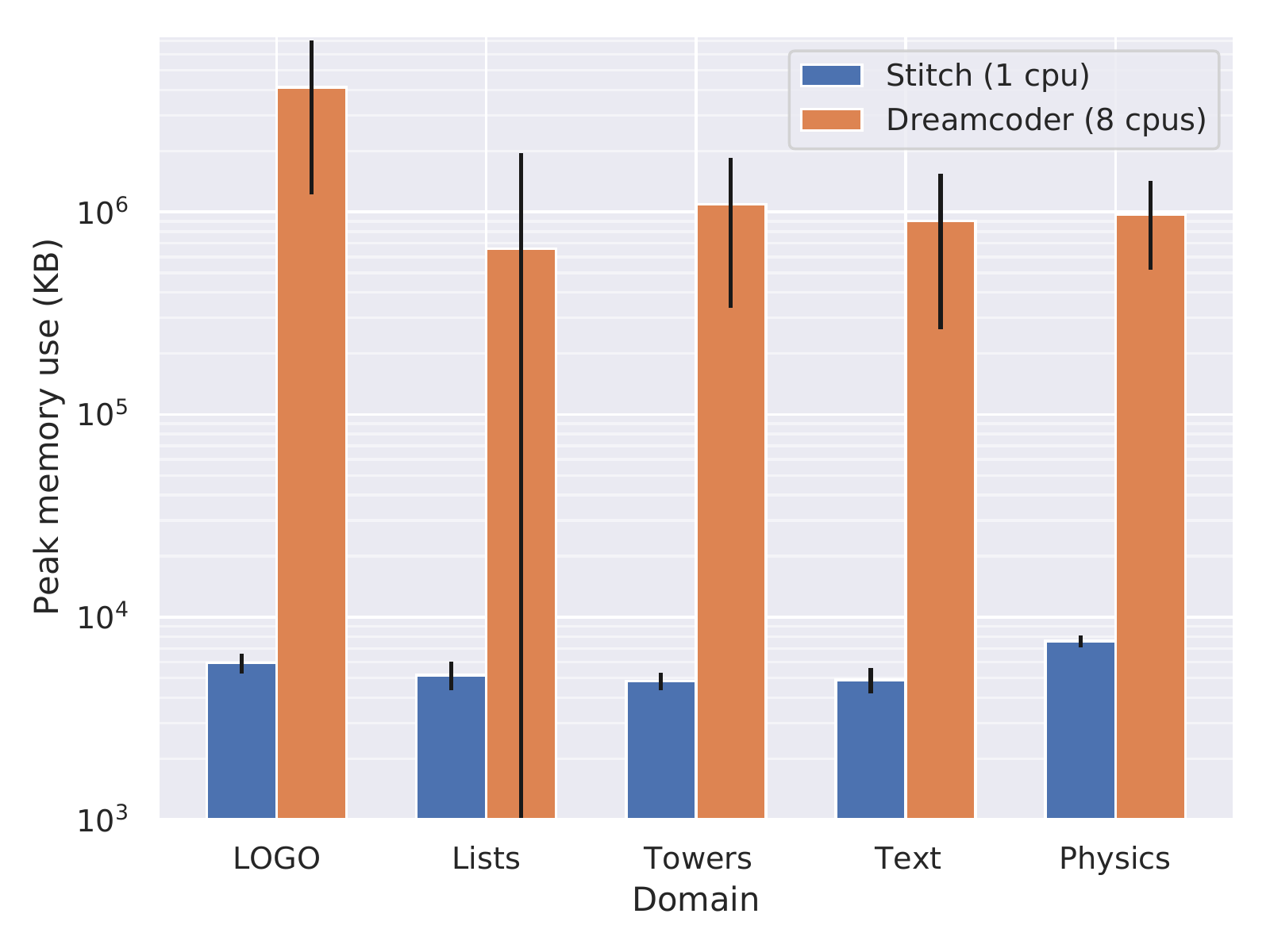}
      \caption{Peak memory usage of \stitch\ and DreamCoder while running on the five domains considered in \ref{subsec:claim-1}, averaged over all benchmarks. Lower is better; black lines indicate $\pm$ one standard deviation. Note the logarithmic y-axis. }
       \label{fig:casestudy1-memory}
     \end{minipage}
     \hspace{.5cm}
    \begin{minipage}{.45\textwidth}
      \centering
      \includegraphics[width=\textwidth,keepaspectratio]{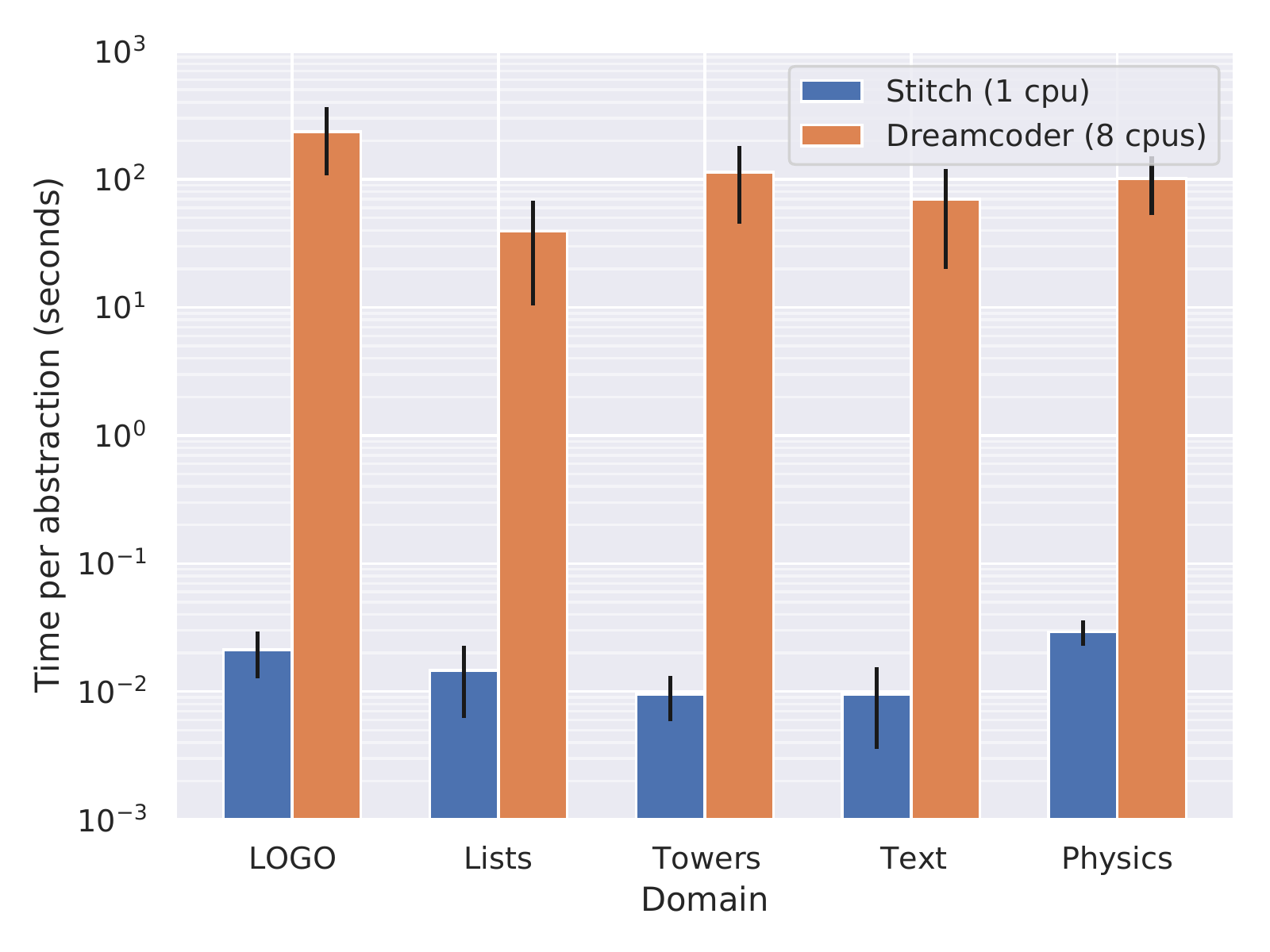}
      \caption{Wall-clock time required to find (and rewrite under) one abstraction in each of the five domains from \ref{subsec:claim-1}, averaged over all benchmarks. Lower is better; black lines indicate $\pm$ one standard deviation. Note the logarithmic y-axis.}
       \label{fig:casestudy1-time}
     \end{minipage}
    \vspace{-.25cm}
\end{figure}

\subsubsection*{Results}\label{subsec:casestudy1results} We compare DreamCoder  and \stitch{} for library quality and resource efficiency. 

\textit{\textbf{Library quality.}} We first examine the quality of the libraries learned by both \stitch\ and DreamCoder using the compression metric in \cref{eqn:min-metric}: \cref{fig:casestudy1-compression_ratio} shows the ratio between them across all of the benchmarks for each domain. A ratio of 1.0 indicates that programs are the \textit{exact same length} under the libraries learned by DreamCoder and \stitch, a ratio greater than 1.0 indicates that \stitch\ learns more compressive libraries, and a ratio less than 1.0 indicates that \stitch\ learns libraries which are less compressive. For example, a ratio of 1.1 indicates that DreamCoder rewrote to produce a corpus 10\% larger than that of \stitch, so it achieved less compression.

These results show that \stitch\ generally learns libraries of comparable and often greater quality than DreamCoder when matching the number of abstractions learned by the latter. In the \textit{logo}, \textit{towers}, \textit{text}, and \textit{physics} domains, \stitch\ always finds abstractions that are of equal---and often considerably greater---compressive quality than DreamCoder does; in the \textit{list} domain, \stitch\ more often than not still obtains better compression, but sometimes loses out to DreamCoder. This is a result of the fact that \stitch\ cannot learn higher-order abstractions, which are useful in this domain; although we emphasize our focus is on scalability, we will later present an extension of \stitch\ capable of handling most higher-order functions in \cref{subsec:claim-5}, based on the formalism developed in \cref{subsec:hybridapproach}. Nonetheless, we conclude that \textbf{\stitch\ learns libraries whose quality is comparable to and often better than those found by DreamCoder.}

\textit{\textbf{Resource efficiency.}} In addition to the quality of the libraries found, we are interested in how the two methods compare in terms of time and space requirements. Since we a priori believe \stitch\ to be significantly faster, for this evaluation we allow DreamCoder to use 8 CPUs but limit \stitch\ to single threading\footnote{While this may seem unfair to \stitch, it is worth noting that it would be unlikely to benefit from multithreading when running on the order of milliseconds anyway; DreamCoder, on the other hand, would struggle greatly in these domains without the aid of parallelism.}. The results are shown in \cref{fig:casestudy1-memory} and \cref{fig:casestudy1-time}; in summary, we find that \stitch\ takes \textit{tens of milliseconds} to discover abstractions across all five domains---achieving a \textit{3-4 order of magnitude} speed-up over DreamCoder---while also requiring more than \textit{2 orders of magnitude} less memory.
We thus conclude that \textbf{\stitch\ is dramatically more efficient than the state-of-the-art deductive baseline} when replicating the iterative library learning experiments of \citealt{ellis2021dreamcoder}.

\subsection{Large-Scale Corpus Library Learning}
\label{subsec:claim-2}

\subsubsection*{Experimental setup.}  While the previous experiment allowed us to benchmark \stitch\ directly against a state-of-the-art deductive baseline, the iterated learning setting considered by \citealt{ellis2021dreamcoder} only evaluates library learning on relatively small corpora of short programs discovered by the synthesizer.
Our second experiment is instead designed to evaluate \stitch\ in a more traditional learning setting, in which we aim to learn libraries of abstractions from a large corpus of existing programs all at once.

We source our larger-scale program datasets from \citealt{cogsci}, which present a series of datasets designed as a benchmark for comparing human-level abstraction learning and graphics program writing against automated synthesis and library learning models. These datasets are divided into two distinct high-level domains (\textit{technical drawings} and \textit{block-tower planning tasks}), each consisting of four distinct subdomains containing 250 programs:
\begin{itemize}
    \item \textbf{Technical drawing domains: \textit{nuts and bolts; vehicles; gadgets; furniture}}: CAD-like graphics programs that render technical drawings of common objects, written in an initial DSL consisting of looped transformations (scaling, translation, rotation) over simple geometric curves (lines and arcs). 
    \item \textbf{Tower construction domains: \textit{bridges; cities; houses; castles}}: Planning programs that construct complex architectures by placing blocks, written in an initial DSL that moves a virtual hand over a canvas and places horizontal and vertical bricks.
\end{itemize}
We choose these datasets not only for their size and scale (full dataset statistics in Table \ref{tab:domains}), but also for the complexity of their potential abstractions: \citealt{cogsci} explicitly design their corpora to contain complex hierarchical structures throughout the programs, making them an interesting setting for library learning.

\begin{table}[ht]
\begin{tabular}{@{}lrrrr@{}}
\toprule
 Domain       &  \#Programs &  Average program length & Average program depth  \\ \midrule
\rowcolor[HTML]{EFEFEF} 
nuts \& bolts & 250 & $76.03 \pm 24.22$ & $15.18 \pm 2.13$ \\
gadgets       & 250 & $142.85 \pm 87.32$ & $20.88 \pm 2.37$ \\
\rowcolor[HTML]{EFEFEF}
furniture     & 250 & $171.74 \pm 48.41$ & $31.83 \pm 5.33$ \\
vehicles      & 250 & $141.70 \pm 40.23$ & $21.22 \pm 1.35$ \\
\rowcolor[HTML]{EFEFEF}
bridges       & 250 & $137.03 \pm 59.71$ & $92.35 \pm 39.80$ \\
cities        & 250 & $161.70 \pm 55.56$ & $109.80 \pm 37.66$ \\
\rowcolor[HTML]{EFEFEF}
castles       & 250 & $189.09 \pm 60.18$ & $128.27 \pm 40.77$ \\
houses        & 250 & $168.13 \pm 55.75$ & $114.85 \pm 37.60$
\end{tabular}
\caption{Summary statistics about the domains from \citealt{cogsci}. Program length is the number of terminal symbols in the program; program depth is the length of the longest path from root to leaf in the program tree. Both are reported as the mean over the entire dataset $\pm$ one standard deviation.}
\label{tab:domains}
\end{table}

When performing library learning in a synthesis setting by compressing a corpus of solutions, it's desirable to find abstractions that would be useful for solving \textit{new} tasks, as opposed to abstractions that overfit to the existing solutions. To evaluate how well the abstractions we learn apply to heldout programs in the domain, we split the corpora into train and test sets, running \stitch\ on the train set and evaluating its compression on the test set.
Since \citealt{cogsci} do not present a train/test split of their datasets, we use stochastic cross validation to evaluate the generalization of the libraries found by \stitch. For each domain, we randomly sample 80\% of the dataset to train on and reserve the last 20\% as a held out test set; we repeat this procedure 50 times. We then ask \stitch\ to learn a library consisting of $\leq 10$ abstractions with a maximum arity of 3, and average the results across the different random seeds.

To obtain a baseline to compare its performance against, we once again turned to DreamCoder \cite{ellis2021dreamcoder}. However, we found that DreamCoder was unable to discover even a single abstraction when run directly on any of the datasets from \citealt{cogsci}, despite being given hours of runtime and 256GB of RAM. We also experimented with heavily sub-sampling the training dataset before passing it to DreamCoder, but failed to find a configuration under which DreamCoder finds any interesting abstractions at all due to the fact that it immediately blows up on programs as long as these. As a result, we resort to presenting \stitch's performance metrics without any baseline to compare against; we stress that this is a direct result of the fact that \stitch\ is the first library learning tool that scales to such a challenging setting.

\subsubsection*{Results}

\begin{table}[ht]
\begin{tabular}{@{}lrrrr@{}}
\toprule
 \multirow{2}{*}{Domain}       & \multicolumn{2}{c}{Compression Ratio} & \multirow{2}{*}{Runtime (s)} & \multirow{2}{*}{Peak mem. usage (MB)}  \\ \cline{2-3}
              & Training set & Test set               &             & \\ \hline
\rowcolor[HTML]{EFEFEF}
 nuts \& bolts  & $   12.00 \pm 0.25  $  & $ 11.57 \pm 0.49$  & $  0.24 \pm 0.04$  & $    11.10 \pm 0.17   $  \\
   gadgets     & $    4.03 \pm 0.15  $  & $  3.91 \pm 0.31$  & $  1.83 \pm 0.35$  & $    30.11 \pm 0.85    $ \\
\rowcolor[HTML]{EFEFEF}
  furniture    & $    4.95 \pm 0.07  $  & $  4.85 \pm 0.26$  & $  2.83 \pm 0.49$  & $    30.88 \pm 0.62    $ \\
   vehicles    & $    4.28 \pm 0.14  $  & $  4.14 \pm 0.25$  & $  1.52 \pm 0.29$  & $    26.98 \pm 0.88    $ \\
\rowcolor[HTML]{EFEFEF}
   bridges     & $    4.36 \pm 0.06  $  & $  3.78 \pm 0.14$  & $ 17.58 \pm 1.26$  & $   189.08 \pm 10.10   $ \\
    cities     & $    3.15 \pm 0.05  $  & $  3.06 \pm 0.14$  & $ 50.74 \pm 3.96$  & $    413.63 \pm 9.51   $ \\
\rowcolor[HTML]{EFEFEF}
   castles     & $    2.57 \pm 0.07  $  & $  2.55 \pm 0.08$  & $ 77.26 \pm 6.98$  & $   683.89 \pm 32.96   $ \\
    houses     & $    8.92 \pm 0.21  $  & $  8.85 \pm 0.57$  & $ 15.54 \pm 1.37$  & $    241.17 \pm 2.86   $ \\

\end{tabular}
\vspace{0.5em}
\caption{Results for the large-scale library learning experiment in Sec. \ref{subsec:claim-2}. The compression ratio refers to how many times smaller the corpus is after rewriting under the learned library compared to the original corpus; higher is thus better. All results are given as the mean $\pm$ one standard deviation over 50 runs with different random seeds for the dataset splitting.}
\label{tab:claim-2}
\end{table}

The results are summarized in Table \ref{tab:claim-2}. We find that \stitch\  scales up to even the most complex sub-domains, running in 77 seconds with a peak memory usage below 1GB on \textit{castles}. On four out of eight of the domains, \stitch\ finishes in single-digit seconds and consumes only tens of megabytes. This stands in stark contrast to DreamCoder, which we were unable to run on the very simple nuts \& bolts domain even with 256GB RAM and several hours worth of compute budget. These results thus support our claim that \textbf{\stitch\ scales to corpora of programs that would be intractable with prior library learning approaches}.

We hope that by providing our results on these datasets in full, future work in this field will benefit from having a directly comparable baseline.

\subsection{Robustness to Early Search Termination}
\label{subsec:claim-3}

\subsubsection*{Experimental setup}
This experiment is designed to evaluate how early into the search procedure \stitch\ finds what will eventually prove to be the optimal abstraction.
This is highly relevant in settings where the set of training programs is too large to run the search to completion. The experiment showcases one of \stitch's more subtle strengths: corpus-guided top-down abstraction search is an \emph{anytime} algorithm, and thus does not need to be run to completion to give useful results.

We re-use the domains from \citealt{cogsci} and once again evaluate the quality of the library learned (measured in program compression), similarly to what was done in the previous experiment. However, since we are interested in how quickly \stitch\ finds \emph{what it perceives to be} the optimal abstraction, we measure compressivity of the training dataset itself (rather than a held-out test set) and capture the compression ratio obtained by each \emph{candidate} abstraction found \textit{during} search (rather than just the compression ratio obtained when search has been run to completion). Thus, we are able to investigate how early on during the search procedure \stitch\ converges on a chosen library. We restrict \stitch\ to learning a single abstraction with a maximum arity of 3.

\subsubsection*{Results}

\begin{figure}[t]
    \centering
    \includegraphics[width=0.6\textwidth]{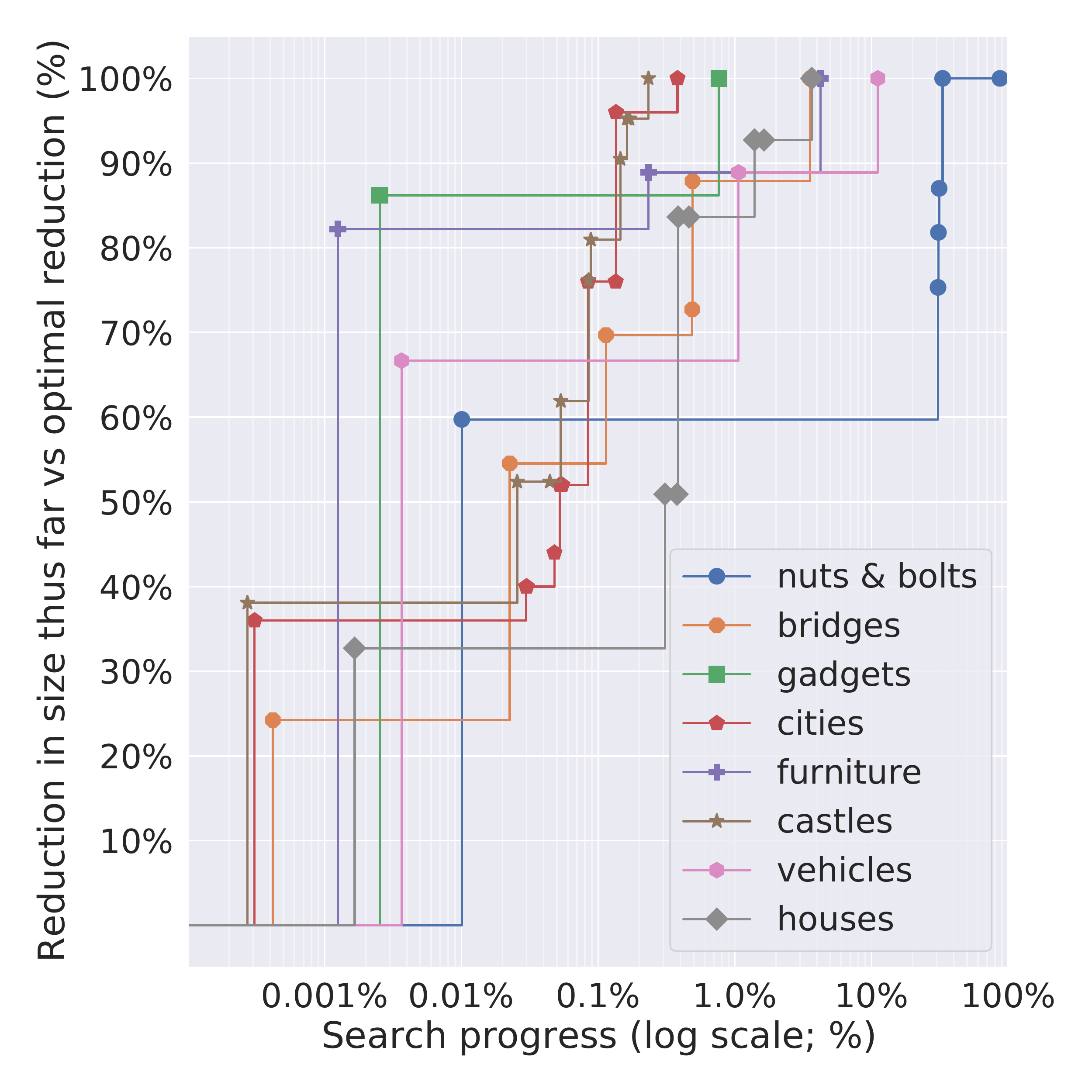}
    \caption{Reduction in size obtained on the training set when rewritten under the best abstraction found thus far vs. the number of nodes expanded during search. The y-axis is normalized with respect to the optimal abstraction; the x-axis is normalized with respect to the total number of nodes explored by \stitch. Lines thus end earlier the quicker \stitch\ finds the optimal abstraction; however in all runs \stitch\ \textit{continues to search} until reaching 100\% on the x-axis, exploring the rest of the abstraction space without finding a new best abstraction. Note the logarithmic x-axis.}
    \label{fig:cs2-ex3}
\end{figure}

The results are shown in Fig. \ref{fig:cs2-ex3}. These results validate our hypothesis that \textbf{\stitch\ is empirically robust to terminating the search procedure early}: in every sub-domain except for \emph{nuts \& bolts} \stitch\ converges to the optimal abstraction very early on, having only completed a tiny fraction of the total search.\footnote{Given that significant attention has already been given to wall-clock run-times of \stitch\ on similar workloads in \ref{subsec:claim-2}, we here use the number of nodes explored instead of wall-clock time to ensure deterministic and easily reproducible results.} We believe that this has great importance for \stitch 's applicability in data-rich settings since it suggests that a nearly-optimal abstraction can be found even if the search must be terminated early (e.g. after a fixed amount of time has passed), making early stopping an empirically useful way of speeding up the library learning process.

\subsection{Ablation Study}
\label{subsec:claim-4}

\subsubsection*{Experimental setup}
\textsc{\stitch} implements several different optimizations, which we have argued hasten the search for abstractions.
To verify that this claim holds in practice, we now carry out a brief ablation study.

Since the space of every possible combination of optimizations is too large to present succinctly, we focus our attention on four ablations:
\begin{itemize}
    \item \textbf{no-arg-capture} (from \cref{subsec:strict-dom-compression}), which disables the pruning of abstractions which are only ever used with the exact same set of arguments (and these arguments could therefore just be in-lined for greater compression).
    \item \textbf{no-upper-bound}, which disables the upper bound based pruning.
    \item \textbf{no-redundant-args} (from \cref{subsec:strict-dom-compression}), which disables the pruning of multi-argument abstractions that have a redundant argument that could be removed because it is always the same as another argument.
    \item \textbf{no-opts}, which disables all of \stitch's optimizations.
\end{itemize}
To isolate the impact of disabling optimizations, we run a single iteration of abstraction learning on each of the 8 domains from \ref{subsec:claim-2} and collect the number of nodes explored during the search. The first iteration is generally the most challenging as the corpus is large and has not yet been compressed at all. Focusing on the number of nodes explored (rather than for example runtime) allows for deterministic results. We also fix the maximum arity of abstractions to 3 in all runs, aiming to strike a good balance between compute requirements and how much the optimization will be exposed. We limit each run to 50GB of virtual address space, as well as 90 minutes of compute.

\begin{table}[tbp]
    \centering
    \begin{tabular}{l | c c c c c c c c }
     \diagbox{\small{System}}{\small{Domain}} 
     & \small{bridges} & \small{castles} & \small{cities}  & \small{gadgets}    & \small{furniture} & \small{houses}   & \makecell{\small{nuts \&}\\\small{bolts}} & \small{vehicles} \\ \hline
     \rowcolor[HTML]{EFEFEF}
     {\small no-arg-capture}              &  {\ram}   & {\ram}   & {\ram}   & 16714.28  & {\ram}       & {\ram}     & 30.21      & 26.67   \\
     {\small no-upper-bound}   &  12.15 &  23.71 & 27.90 &  207.76  &  119.00  &  36.18 &   179.12    & 179.99 \\
     \rowcolor[HTML]{EFEFEF}
     {\small no-redundant-args}  &  1.95  &  1.42  & 1.37  &   1.01   &   1.00   & 1.01  &    1.09     &  1.00   \\
     {\small no-opt}                            &  {\ram}   &  {\ram}  & {\ram}  & {\clk}     & {\ram}      & {\ram}    & {\clk}       & {\ram}   \\
\end{tabular}
\vspace{0.5em}
    \caption{Results from the ablation study. Each cell contains the ratio between the number of nodes explored during search by that particular system and the number of nodes explored by the baseline on the same domain. Lower is better; 1.00 means performance is identical to the baseline. Cells labeled \ram \ crashed due to reaching the 50GB virtual address space limit, while those labeled \ \clksmol\ \ reached the 90 minute time restriction.}
    \label{tab:ablation}
\end{table}

\subsubsection*{Results}

The results are shown in \autoref{tab:ablation}. We first note as a sanity check that each ablation does indeed lead to reduced performance in general (i.e. explores a search space larger than the baseline does). Furthermore, the results suggest that upper bound based pruning is the most important in the \textit{nuts \& bolts} and \textit{vehicles} domains, while pruning out argument capture abstractions is the most important in the other six domains; it is noteworthy that this latter ablation by itself causes \stitch\ to reach the memory limit on more than half of the domains. 
On the other hand, disabling the pruning of redundant arguments has a relatively modest impact on the size of the search space, but still leads to an almost 2x improvement in the bridge domain.

Perhaps the most important takeaway from this ablation study is that when \textit{all} optimizations are disabled, \stitch\ fails to find an abstraction within the resource budget on any of the domains. This verifies our hypothesis that corpus-guided pruning of the search space is the key factor involved in making top-down synthesis of abstractions tractable.

\subsection{Learning Libraries of Higher-Order Functions}
\label{subsec:claim-5}

\subsubsection*{Experimental setup}
Our experiments up till now have focused on performant and scalable library learning.
This comes at the expense of some expressivity: deductive rewrite systems can, in principle, express broader spaces of refactorings.
For example, a rewrite based on inverting $\beta$-reduction allows inventing auxiliary $\lambda$-abstractions, which helps with learning higher-order functions: in~\citealt{ellis2021dreamcoder}, DreamCoder is shown to recover higher-order functions such as \texttt{map}, \texttt{fold}, \texttt{unfold}, \texttt{filter}, and \texttt{zip\_with}, starting from the Y-combinator.
This works by constructing a version space which encodes every refactoring that is equivalent up to $\beta$-inversion rewrites.
But DreamCoder's coverage comes at a steep cost as inverting $\beta$-reduction is expensive.

In this experiment, we seek to give evidence that it is possible to make \stitch\ recover all of these higher-order functions by layering it on top of the version space obtained after a single step of DreamCoder's $\beta$-inversion, following the formalism outlined in \cref{subsec:hybridapproach}.
We then compare this modified version of \stitch\ with DreamCoder on its ability to learn these higher-order functions from programs generated by intermediate DreamCoder iterations, and measure the runtime of each approach.

\subsubsection*{Probabilistic Re-ranking}
While the approach outlined in \cref{subsec:hybridapproach} should suffice to layer \stitch\ on top of deductive rewrite systems based on version spaces, some extra care needs to be taken to combine it with DreamCoder. This is because DreamCoder implements a probabilistic Bayesian objective for judging candidate abstractions (exploiting the connection between compression and probability~\cite{shannon1948mathematical}), seeking the library $L$ which maximizes $P(L)\prod_{p\in \corpus}P(p|L)$ for a given or learned prior $P(L)$ and program likelihood $P(p|L)$. \stitch, on the other hand, effectively judges compression quality via a cost function capturing the (weighted) size of the programs as detailed in \cref{sec:framework-instantiation}.

To implement this probabilistic heuristic in \stitch, we simply run \stitch\ on the version spaces as-is but then re-score each complete abstraction popped off of the priority queue under the Bayesian objective, using DreamCoder's models of the prior and likelihood. To make this integration easier, we re-implement \stitch\ in Python, giving a prototype version called \textsc{pyStitch} which is only used for this experiment. Our implementation employs strict dominance pruning, and prunes using the bound on the approximate utility given in \cref{subsec:hybridapproach}.


\begin{table}[ht]
\begin{tabular}{llcccccr}\toprule
\multicolumn{2}{c}{System} & Fold &Unfold &Map &Filter &ZipWith &Time (s) \\ \midrule 
\rowcolor[HTML]{EFEFEF} 
 &Base &\checkmark &$\times$ &\checkmark &$\times$ &\checkmark & 503 \\
 \rowcolor[HTML]{EFEFEF} 
&+Bayes &\checkmark &$\times$ &\checkmark &$\times$ &\checkmark &817 \\
\rowcolor[HTML]{EFEFEF} 
&+VS &\checkmark &\checkmark &\checkmark &$\times$ &\checkmark &3231 \\
\rowcolor[HTML]{EFEFEF} 
\multirow{-4}{*}{\textsc{pyStitch}} &+Bayes+VS &\checkmark &\checkmark &\checkmark &\checkmark &\checkmark  &3042 \\
\multirow{4}{*}{DreamCoder} &Step 1 &$\times$ &$\times$ &\checkmark &$\times$ &\checkmark & 67 \\
&Step 2 &$\times$ &$\times$ &\checkmark &\checkmark &$\times$ & 116 \\
&Step 3 &$\times$ &$\times$ &\checkmark &\checkmark &\checkmark & 2254 \\
&Step 4 &\checkmark &\checkmark &\checkmark &\checkmark &\checkmark & 228048 \\
\bottomrule
\end{tabular}
\caption{Comparing library learners on functional programming exercises. DreamCoder, $n$-steps: deductive baseline rewriting $n$ steps of $\beta$-reduction.
\textsc{pyStitch}: Python reimplementation of \stitch, which enables better interoperability with DreamCoder's version space algebra (+VS) and probabilistic models (+Bayes).
\textsc{pyStitch}+VS+Bayes learns all of the same higher-order functions, using $<2\%$ of the compute.
}\label{tab:pystitch}
\end{table}

\subsubsection*{Results}
The results are shown in \cref{tab:pystitch}. We note first the large discrepancy in runtimes; running DreamCoder with 4 steps of rewriting takes roughly 2.5 days of compute on this domain, while even the slowest version of \textsc{pyStitch} still finishes in less than an hour. In terms of the number of higher-order abstractions found, DreamCoder only finds all five when run in its most expensive configuration; reducing the computation cost quickly decreases its expressivity. For \textsc{pyStitch}, the Bayesian re-ranking alone (\textsc{pyStitch}+Bayes) does not yield any improvements, while running it on the version spaces (\textsc{pyStitch}+VS) yields 4 out of 5 functions. However, it is only when these two adaptations are used in conjunction (\textsc{pyStitch}+Bayes+VS) that the \stitch-based method is able to find all of the higher-order functions.

In summary, we find that running CTS after a single step of version space rewriting and then probabilistically re-ranking the results suffices to recover the core higher-order functions that DreamCoder learns, while using $<$2\% of its compute. We thus conclude that \textbf{running \stitch\ on top of a deductive rewrite system reduces the expressivity gap, while retaining superior performance}.

\section{Related Work}
\label{sec:related-work}
\stitch\ is related to two core ideas from prior work: \textbf{deductive refactoring and library learning systems}, which introduce the idea of learning abstractions that capture common structure across a set of programs, but have largely been driven by deductive algorithms; and \textbf{guided top-down program synthesis systems}, which use cost functions to guide top-down enumerative search over a space of programs, but have largely been used to synthesize whole programs for individual tasks in prior work.

\subsection{Deductive Refactoring and Library Learning}
Recent work shares \stitch's goal of learning \textit{libraries} of program abstractions which capture reusable structure across a corpus of programs \cite{ellis2018library,ellis2021dreamcoder,dechter2013bootstrap,cropper2019playgol,shin2019program,allamanis2014mining,iyer2019learning,wong2021leveraging, jones2021shapemod}. Several of these prior approaches also introduce a utility metric based on \textit{program compression} in order to determine the most useful candidate abstractions to retain \cite{dechter2013bootstrap,ellis2021dreamcoder,wong2021leveraging,lazaro2019beyond,iyer2019learning}.

Much of this this prior work follows a bottom-up approach to abstraction learning, combining a bottom-up traversal across individual training programs with a second stage to extract shared abstractions from across the training corpus. This approach includes systems that work through direct \textit{memoization} of subtrees across a corpus of programs \cite{dechter2013bootstrap,lin2014bias,lazaro2019beyond}; \textit{antiunification} (caching tree templates that can be unified with training program syntax trees) \cite{ellis2018library,henderson2013cumulative,hwang2011inducing,iyer2019learning}; or by more sophisticated \textit{refactoring} using one or more rewrite rules to expose additional shared structure across training programs \cite{ellis2018library,ellis2021dreamcoder, chlipala2017end, liang2010learning}.

Many of the bottom up algorithms draw more generally on \textit{deductive synthesis} approaches that apply local rewrite rules in a bottom-up fashion to program trees in order to refactor them --- historically, to synthesize programs from a declarative specification of desired function \cite{burstall1977transformation,manna1980deductive}. Deductive approaches to library learning, however, confront fundamental memory and search-time scaling challenges as the corpus size and depth of the training programs increases; prior deductive approaches such as \cite{ellis2018library,ellis2021dreamcoder} use version spaces \cite{lau2003programming, mitchell1977version} to mitigate the memory usage during bottom-up abstraction proposal. Still, deductive approaches are challenging to bound and prune (unlike the top-down approach we take in \stitch ), as they generally traverse individual program trees locally and must store possible abstraction candidates in memory before the extraction step. 

Some prior work \cite{shin2019program,allamanis2014mining} takes an MCMC-based approach with better scaling behavior than deductive rewrites, but differs from the goals of \stitch\ as they don't address binders and focus on common syntactic fragments instead of well-formed functions.

\subsection{Guided Top-Down Program Synthesis}
\stitch\ uses a corpus-guided top-down approach to learning library abstractions that is closely related to recent \textit{guided enumerative synthesis techniques}. This includes methods that leverage type-based constraints on holes \cite{feser2015synthesizing,polikarpova2016program}, over- and under-approximations of the behaviors of holes \cite{lee2016,chen2020}, and probabilistic techniques to heuristically guide the search \cite{balog2016deepcoder,ellis2020dreamcoder,ellis2021dreamcoder,nye2021blended,2020swaratneuraladmissableheuristics}. These approaches have largely been applied in to synthesize entire programs based on input/output examples or another form of specification, in contrast to the abstraction-learning goal in our work \cite{allamanis2018survey,balog2016deepcoder,chen2018execution,ellis2018library,ganin2018synthesizing,koukoutos2017repair}. Like \stitch\, however, these approaches sometimes use cost functions (such as the likelihood of a partially enumerated program under a hand-crafted or learned probabilistic generative model over programs) in order to direct search towards more desirable program trees. However, \stitch's cost function leverages a more direct relationship between partially-enumerated candidate functions and the existing training corpus, unlike the cost functions typically applied in inductive synthesis, which must be estimated from input/output examples.

\subsection{Lambda-Aware Unification}

The \lambdaU\ procedure presented in \cref{sec:technical} relates to prior work on unification modulo binders \cite{Huet75,Miller91,Miller92,Dowek95,Dowek96}.

\textit{Notion of beta-equivalence.} This prior work is concerned with more general notions of equivalence modulo beta-reduction, while \lambdaU\ is based on a restricted but fast syntax-driven equivalence. For example, in \citealt{Dowek96} one might try to unify $(\alpha\ \texttt{foo})$ with $(\texttt{foo})$ and get the two solutions: $[\alpha \to (\lambda.\ \$0)]$ and $[\alpha \to (\lambda.\ \texttt{foo})]$. In contrast, \lambdaU\ will never introduce a $\lambda-$abstraction to create a higher order argument and thus $(\alpha\ \texttt{foo})$ and $(\texttt{foo})$ would not unify at all, since one is an application while the other is a primitive. Instead, higher order abstraction arguments are handled by combining the algorithm with a deductive approach when desirable as in \cref{subsec:hybridapproach}. We also note that in $\lambdaU(A,e)$ it is assumed that $e$ is an expression and thus does not contain abstraction variables, which further simplifies the approach compared to this prior work.

\textit{Handling of binders.} \citealt{Dowek95} and \citealt{Dowek96} both employ de Bruijn indexed variables and therefore must similarly account for the shifting of variables in arguments when inverting beta reduction. While \lambdaU\ handles this through \textsc{DownshiftAll}, \citealt{Dowek95} and \citealt{Dowek96} both converting their terms to the $\lambda\sigma-$calculus of explicit substitutions \cite{abadi1989explicitsubst} which allows them to insert upshifting operators at each abstraction variable location. This alternate handling of shifting is useful given the more general notion of equivalence they are considering, but is excessive for our simpler syntax-guided task.

\textit{Handling of holes.} In $\lambdaU(A,e)$ we allow $A$ to contain holes $\hole$ which are allowed to violate index shifting rules during unification as they are considered unfinished subtrees of the abstraction. This is handled through the use of $\&i$ indices. While the special-case handling of holes in expressions is not directly part of this prior work, there are similarities between it and the \textit{Skolemization} \cite{skolem1920} done in \citealt{Miller92}. Skolemization allows for lifting an existential quantifier (i.e. an abstraction variable or hole) above a universal quantifier (i.e. a lambda) by turning the abstraction variable or hole into a function of its local context --- in essence piping the local context into the hole. In the context of the lambda calculus this is essentially a form of lambda lifting \cite{Johnsson1985}, which is the process of lifting a local function that contains free variables by binding the free variables as additional arguments to the function and passing them in at each call site. In \citealt{Miller92} this is used for abstraction variables (as there are no holes) and aids in the more general notion of beta-equivalence they are dealing with, while for our purposes the $\&i$ variables suffice and don't require extra manipulations of lambdas to thread in the local context.

\subsection{Upper Bounds in Network Motifs}
There is also an interesting connection between this work and prior work on finding \textit{network motifs} \cite{networkmotifs2002}, which are frequently-occurring subgraphs within a corpus of graphs. One approach to mining high-frequency motifs  is based on growing a motif one edge at a time, much like we grow abstractions one node at a time during our search \cite{networkmotifs2005,kuramochi2004,kuramochi2001}. This approach makes use of the insight that the frequency of a motif will only decrease as it is grown, giving an upper bound on the frequency of any motif derived from it. While we employ a more complex compressive utility function than their frequency-based utility, our upper bound is based on a similar insight that larger abstractions will match at a subset of the locations.


\subsection{Comparison to \textsc{babble}}\label{subsec:babble}

\textsc{babble} \cite{babble2023} is concurrent work in library learning that likewise adopts the compression objective from DreamCoder. While \textsc{babble} focuses primarily on expressive library learning through an algorithm that can reason over semantic equivalences represented through e-graphs, \stitch\ focuses primarily on efficient library learning through a parallel anytime branch-and-bound algorithm.

\stitch\ synthesizes the maximally-compressive abstraction for a corpus of programs through a branch-and-bound top-down search, computing an upper bound on the compression of any partially-constructed abstraction to guide and prune the search. In contrast, \textsc{babble} operates on an e-graph representing many equivalent corpora and uses anti-unification to propose candidate abstractions and a custom e-graph beam-search extraction algorithm to select a maximally-compressive set of abstractions from the proposal.

At a high level, \stitch\ aims to be \textit{efficient} while \textsc{babble} aims to be \textit{expressive}. The key differences between the algorithms are:
\begin{itemize}

    \item \textit{Efficiency}: On DreamCoder’s benchmarks, \textsc{babble} is 10-100x faster than DreamCoder while \stitch\  is 1,000-10,000x faster than DreamCoder. Unlike \textsc{babble}, \stitch\ is an anytime algorithm that can be stopped early for a best-so-far result.\vspace{5pt}
    
    \item \textit{Expressivity}: \textsc{babble} can learn libraries modulo equational theories, allowing it to find common semantic abstractions despite syntactic differences in the programs. \stitch\ only explores non-syntactic equivalences preliminarily in the \textsc{PyStitch} prototype with the DreamCoder beta-inversion rewrite rule (\cref{subsec:hybridapproach} and \cref{subsec:claim-5}).\vspace{5pt}
    
    \item \textit{Jointly learning a library}: \textsc{babble} can learn multiple abstractions at once that jointly provide compression, while DreamCoder and \stitch\  repeatedly learn a single abstraction at a time. \stitch\  provides an optimality guarantee on each abstraction learned while \textsc{babble} approximates the joint objective through a beam search.
\end{itemize}

These two approaches has advantages and disadvantages, and we believe that there is strong potential in combining the two.

\section{Conclusion and Future Work}\label{sec:conclusion}
We have presented \textit{corpus-guided top-down synthesis} (CTS)---an efficient new algorithm for synthesizing libraries of functional abstractions capturing common functionality within a corpus of programs. CTS directly synthesizes the abstractions, rather than exposing them through a series of rewrites as is done by deductive systems. Key to its performance is the usage of a guiding utility function, which allows CTS to effectively search (and prune out large portions of) the space of possible abstractions.


We implement this algorithm in \stitch, an open-source library learning tool with an associated Python library and Rust crate.\footnote{\url{https://github.com/mlb2251/stitch}} We evaluate \stitch\ across five experimental settings, demonstrating that it learns comparably compressive libraries with 2 orders of magnitude less memory and 3-4 orders of magnitude less time compared to the state-of-the-art deductive algorithm of \citealt{ellis2021dreamcoder}. We also find that \stitch\ scales to learning libraries of abstractions from much larger datasets of deeper program trees than is possible with prior work, and that the \textit{anytime} property of corpus-guided top-down search --- abstractions discovered via top-down search are already compressive early in search and improve as it continues ---  offers the opportunity for high-quality library learning even in complex domains through early search termination.

There remain many open problems and exciting directions in abstraction learning. One major direction is extending the CTS approach to handle reasoning over more general deductive rewrite systems, like those encoded in \texttt{egg} \cite{10.1145/3434304} as \textsc{babble} does \cite{babble2023} (also see \cref{subsec:babble}). The \textsc{PyStitch} prototype presented in \cref{subsec:hybridapproach} only works with a single rewrite rule (beta-inversion) and uses an approximate utility without an optimality guarantee. This is because it is challenging to adapt the upper bound used by \stitch\ to version spaces or e-graphs due to the trade-off between the compression gained from equational rewrites and compression gained from abstraction learning.

The speedup afforded by \stitch\ may allow for new applications of abstraction learning as well, as it can now be used as an inexpensive subroutine that can be called much more frequently than prior algorithms. To further extend its range of applications, CTS could also be adapted to finding abstractions in data structures beyond program trees, such as dataflow DAGs and more general graph structures like molecules, if the bounds and matching can be adapted. Finally, while we work with fairly large lambda calculus programs in this work, there is a clear gap in scale between these programs and those in an actual codebase, so exploring applications of CTS to real-world code is an exciting direction.

\section*{Data Availability Statement}

All source code can be found at \url{https://github.com/mlb2251/stitch}. An artifact for reproducing results in this work is available at \url{https://github.com/mlb2251/stitch-artifact} or alternatively a static version is available at \url{https://zenodo.org/record/7151663} \cite{stitchartifact}.

\begin{acks}                            

We thank A. Lew and J. Andreas for helpful discussions, and J. Feser and I. Kuraj for feedback on the manuscript. M.B. and G.G are supported by the \grantsponsor{NSF}{National Science Foundation}{} (NSF) Graduate Research Fellowship under Grant No. \grantnum{NSF}{2141064}. M.B. is also supported by the \grantsponsor{DARPA}{Defense Advanced Research Projects Agency}{} (DARPA) under the SDCPS Contract \grantnum{DARPA}{FA8750-20-C-0542}. T.X.O. is supported by Herbert E. Grier (1933) and Dorothy J. Grier Fund Fellowship. L.W. and J.B.T. are supported by \grantsponsor{AFOSR}{AFOSR}{} under grant number \grantnum{AFOSR}{FA9550-19-1-0269}, the MIT Quest for Intelligence, the MIT-IBM Watson AI Lab, ONR Science of AI, and DARPA Machine Common Sense. G.G. is also supported by the MIT Presidential Fellowship. A.S. is supported by the NSF under Grant No. \grantnum{NSF}{1918839}. Any opinions, findings, and conclusions or recommendations expressed in this material are those of the author(s) and do not necessarily reflect the views of sponsors.

\end{acks}

\pagebreak

\bibliography{bib}

\pagebreak
\appendix
\section{Corpus-guided top-down search: full algorithm}
\label{appendix:cts-full}

In this appendix, we give an expansive description of the full corpus-guided top-down search (CTS) algorithm to ease re-implementation.

The full algorithm is given in \cref{alg:corpus_guided}. It takes as input a corpus of programs $\corpus$, a utility function $\util{A}$, and utility upper bound function $\upperbound{\abshole}$. The algorithm searches for and returns the complete abstraction $A_{best}$ that maximizes the utility function. The algorithm maintains a priority queue $Q$ of partial abstractions ordered by their utility upper bound (or alternatively a custom priority function), a best complete abstraction so-far $A_{best}$, and a corresponding best utility so far $U_{best}$ for that abstraction. The priority queue is initialized to hold the single-hole partial abstraction $\hole$, from which any abstraction can be derived. $A_{best}$ and $U_{best}$ are initialized as the best arity-zero abstraction and corresponding utility (line \ref{alg:corpus_guided:arity-zero}), since arity-zero abstractions can be quickly and completely enumerated (each unique, closed subexpression in the corpus is an arity-zero abstraction).

We then proceed to the core loop of the algorithm from lines \ref{alg:corpus_guided:while_start}-\ref{alg:corpus_guided:while_end}. At each step of this loop we pop the highest-priority partial abstraction $\abshole$ off of the priority queue and process it. We discard $\abshole$ if its utility upper bound doesn't exceed our best utility found so far (lines \ref{alg:corpus_guided:bound1_start}-\ref{alg:corpus_guided:bound1_end}). We then choose a hole $h$ in $\abshole$ to expand with the procedure \textsc{Choose-Hole}. \textsc{Choose-Hole} can be a custom function; we find from preliminary experiments on a subset of the datasets that choosing the most recently introduced hole is effective in practice.

The algorithm then uses the procedure \textsc{Expansions} to iterate over all possible single step expansions of the hole $h$ in $\abshole$, such as replacing the hole with \texttt{+} or with (\texttt{app} \hole\ \hole). For each expanded abstraction $\abshole'$, it is easy to compute its set of match locations $\matches{\abshole'}$ since we know this will be a subset of the match locations of $\abshole$ (and these will be \textit{disjoint} subsets, except when expanding into an abstraction variable $\alpha$). We can easily inspect the relevant subtree at each match location of the original abstraction to see which expansions are valid and which match locations will be preserved by a given expansion. Note that an expansion to a free variable can be pruned immediately, as any resulting abstraction will not be well-formed.

When expanding to an abstraction variable $\alpha$, if $\alpha$ is a new variable not already present in the partial abstraction, the set of match locations is unchanged. If $\alpha$ is an existing abstraction variable then this is a situation where the same variable is being used in more than one place, as in the \textit{square} abstraction $(\lambda \alpha.\ \texttt{*}\ \alpha\  \alpha)$. In this case we restrict the match locations to the subset of locations where within a location all instances of $\alpha$ match against the same subtree. Additionally, if a maximum arity is provided then an expansion that causes the abstraction to exceed this limit is not considered.

When considering each possible expanded abstraction $\abshole'$, there is room for strong corpus-guidance. First, we don't need to consider any expansions that would result in zero match locations since all abstractions in this branch of search will have zero rewrite locations per \cref{lemma:matches-subset-rewrites} (lines \ref{alg:corpus_guided:no_match_start}-\ref{alg:corpus_guided:no_match_end}). Furthermore, we use $\upperbound{\abshole'}$ to upper bound the utility achievable in this branch of search and discard it if it is less than our best utility so far (lines \ref{alg:corpus_guided:bound2_start}-\ref{alg:corpus_guided:bound2_end}). Since each $\abshole'$ covers a disjoint set of match locations (except in the case of expanding into an abstraction variable), the set of match locations often drops rapidly and can allow for the calculation of a tight upper bound depending on the utility function. As a final step of pruning, if we can identify that $\abshole'$ is \textit{strictly dominated} by some other abstraction $\abshole''$, we may discard $\abshole'$ (lines \ref{alg:corpus_guided:inferior_start}-\ref{alg:corpus_guided:inferior_end}). Note additionally that arity-zero abstractions were precomputed, then abstractions that match at a single location are safe to prune as well (as long as they don't have free variables) as arity-zero abstractions are always superior for single match locations.

For any partial abstraction $\abshole'$ that has not been pruned, we then check whether it is a complete abstraction (there are no remaining holes) or whether it is still a partial abstraction. Partial abstractions get pushed to the priority queue ranked by their utility upper bound, and complete abstractions are used to update $A_{best}$ and $U_{best}$ if they have a higher utility than any prior complete abstractions.

Once there are no more partial abstractions remaining in the priority queue, the algorithm terminates. Note that since the algorithm maintains a best abstraction so-far, it can also be terminated early, making it an \textit{any-time algorithm}. We also note that this algorithm is amenable to parallelization as it can be easily parallelized over the while loop on lines \ref{alg:corpus_guided:while_start}-\ref{alg:corpus_guided:while_end}, especially since the algorithm remains sound even when the upper bound $U_{best}$ used pruning is not always up to date.

This algorithm can be iterated to build up a library of abstractions.

Finally note that in our implementation, we employ structural hashing to simplify equality checks between subtrees and avoid re-doing work on multiple identical copies of a subtree. Our complete implementation of CTS for compression is provided at \url{https://github.com/mlb2251/stitch}.

\begin{algorithm}
\caption{Corpus-guided top-down abstraction synthesis. Color-coded: \textcolor{StitchDarkBlue}{Upper bound pruning}, \textcolor{StitchLightBlue}{Zero-usage pruning}, \textcolor{StitchPurple}{Strict dominance pruning}}\label{alg:corpus_guided}
\begin{algorithmic}[1]
\Require Corpus of input programs $\corpus$, utility function $\util{A}$, and utility upper bound function $\upperbound{\abshole}$
\Ensure The maximally compressive abstraction $A_{best}$ 
\State $Q \gets$ Priority-Queue \{ \hole\ \} \Comment{New priority queue with the single partial abstraction \hole} 
\State $A_{best} \gets  \textsc{Best-Arity-Zero-Abstraction}(\corpus)$ \Comment{ Initialize best abstraction so far} \label{alg:corpus_guided:arity-zero}
\State $U_{best} \gets \util{A_{best}}$
\While{$\text{non-empty}(Q)$} \label{alg:corpus_guided:while_start}
    \State $\abshole \gets \text{pop-max}(Q) $  \Comment{ Next partial abstraction to expand }
    \color{StitchDarkBlue}
    \If{ $\upperbound{\abshole} \leq U_{best}$} \label{alg:corpus_guided:bound1_start}
        \State \textbf{continue} \Comment{ Upper bound pruning }
    \EndIf \label{alg:corpus_guided:bound1_end}
    \color{black}
    
    \State $h \gets \textsc{Choose-Hole}(\abshole)$ \Comment{ Choose a hole to expand }
    \For{$(\abshole',M') \in \textsc{Expansions}(\abshole, h, \corpus)$}  \Comment{ abstraction $\abshole'$ and match locations $M'$ }
        \color{StitchLightBlue}
        \If{ $\text{length}(M') == 0$} \label{alg:corpus_guided:no_match_start}
            \State \textbf{continue} \Comment{ No match locations in corpus }
        \EndIf \label{alg:corpus_guided:no_match_end}
        \color{black}
        \color{StitchDarkBlue}
        \If{ $\upperbound{\abshole'} \leq U_{best}$}  \label{alg:corpus_guided:bound2_start}
            \State \textbf{continue} \Comment{ Upper bound pruning }
        \EndIf \label{alg:corpus_guided:bound2_end}
        \color{black}
        \color{StitchPurple}
        \If{ $\text{strictly-dominated}(\abshole',\corpus)$} \label{alg:corpus_guided:inferior_start}
            \State \textbf{continue} \Comment{ Strict dominance pruning }
        \EndIf \label{alg:corpus_guided:inferior_end}
        \color{black}
        \If{ $\text{has-holes}(A_e)$}
            \State $Q \gets  Q \cup \abshole' $ \Comment{ add partial abstraction to heap }
        \Else
            \color{cyan}
            \color{black}
            \If{ $\util{\abshole'} > U_{best} $}
                \State $U_{best} \gets \util{\abshole'} $ \Comment{ new best complete abstraction }
                \State $A_{best} \gets \abshole'$
            \EndIf
        \EndIf
        
    \EndFor
\EndWhile \label{alg:corpus_guided:while_end}
\end{algorithmic}

\end{algorithm}

\section{proof of Correctness of \lambdaU}
\label{appendix:lambdaunify-proof}

As introduced in \cref{sec:technical}, $\lambdaU(A,e)$ returns a mapping from abstraction variables and holes to expressions $[\alpha_i \to e_i',\ \ldots,\  \hole_j \to u_j',\ \ldots]$ such that 
\begin{equation}\label{eqn:beta-red-2}
    (\lambda \alpha_i.\ \ldots\ \lambda \hole_j.\ \ldots A)\ e_i'\ \ldots\ u_j'\ \ldots = e
\end{equation}
through beta reduction. The \lambdaU\ procedure is given in \cref{fig:lambda_unify} and the definition of beta reduction with a modified upshifting operator is given in \cref{fig:subst-upshift}.

We can write \cref{eqn:beta-red-2} in terms of substitution as

\begin{equation}
    [\alpha_i \to e_i',\ \ldots,\  \hole_j \to u_j',\ \ldots]\ \subst  A = e
\end{equation}

Which can in turn be written in terms of \lambdaU\ as

\begin{equation} \label{eqn:correctness}
    \lambdaU(A,e) \subst  A = e
\end{equation}

To prove the correctness of \lambdaU, we must show that \cref{eqn:correctness} holds for all $A$ and $e$.  We do this in two ways: a written proof (below) of our correctness theorem (\cref{lemma:correctness}), and a machine-verifiable proof in Coq. The Coq proof is given in \texttt{stitch.v} in the supplementary material and was proven with \texttt{CoqIDE 8.15.2}. The meanings of the definitions and lemmas in the Coq proof are commented. The Coq proof differs from the written proof in places because encoding the proof in Coq required coming up with some representations that worked well for Coq, though these differences are limited. For example, the Coq proof requires fully formalizing the behavior of \textsc{Merge} and proofs related to it which is more simply explained in the written proof. When lemmas in the written proof below correspond directly to lemmas in the Coq proof, the corresponding Coq lemma is indicated in the written proof below.

\subsection{Well-formedness}

We formalize a notion of \textit{well-formedness} to capture the set of expressions that are accessible starting from an expression with no $\&i$ variables then applying any series of upshifts and downshifts. We define $\text{WF}_d(e)$, the well-formedness of an expression $e$ at a depth $d$, as follows:

\begin{align*}
    &\text{WF}_d\ \lambda.b = \text{WF}_{d+1}\  b \\ 
    &\text{WF}_d\ (f x) = \text{WF}_d\ f \land \text{WF}_d\ x \\
    &\text{WF}_d\ \$i = \top\\ 
    &\text{WF}_d\ \&i = \begin{cases}
        \top, & \text{if } i < d \\
        \bot, & \text{if } i \geq d \\
    \end{cases}\\
    &\text{WF}_d\  t = \top,\ \text{for $t \in \gsym$} \\ 
\end{align*}

Note that trivially any expression $e$ without any $\&i$ variables is well formed, as $\&i$ variables are the only source of $\bot$ in this definition.

If every expression in a mapping $l$ is well-formed, we say that the mapping itself is well-formed, written $(\text{WFMap}\ l)$.

\subsection{Relevant Lemmas}

\ \\\noindent \textbf{The following two lemmas show that upshifting and downshifting preserve well-formedness.} In Coq these lemmas are \texttt{upshift\_wf} and \texttt{downshift\_wf}.

\begin{lemma}\label{lemma:up-well}
$\forall e.\ \forall d.\ \text{WF}_d\ e \implies  \text{WF}_d\ \uparrow_d e$
\end{lemma}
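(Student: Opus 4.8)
The plan is to proceed by structural induction on $e$, keeping the depth $d$ universally quantified in the inductive hypothesis so that the lambda case, which recurses at depth $d+1$, goes through. I would organize the argument around the clauses in the definitions of $\uparrow_d$ and $\text{WF}_d$ and handle each shape of $e$ in turn.

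For the variable, application, and primitive cases the reasoning is immediate. If $e = \$i$, then $\uparrow_d \$i$ is either $\$i$ or $\$(i+1)$, both of which are $\$$-variables and hence trivially well-formed at every depth, so $\text{WF}_d\ \uparrow_d e = \top$. If $e = t \in \gsym$, then $\uparrow_d t = t$ and $\text{WF}_d\ t = \top$. If $e = (f\ x)$, then $\text{WF}_d\ e$ unfolds to $\text{WF}_d\ f \wedge \text{WF}_d\ x$; applying the inductive hypothesis to $f$ and to $x$ at the same depth $d$ and recombining yields $\text{WF}_d\ (\uparrow_d f\ \uparrow_d x) = \text{WF}_d\ \uparrow_d e$.

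For the lambda case $e = \lambda.b$, we have $\text{WF}_d\ e = \text{WF}_{d+1}\ b$ and $\uparrow_d e = \lambda.\uparrow_{d+1} b$, so $\text{WF}_d\ \uparrow_d e = \text{WF}_{d+1}\ \uparrow_{d+1} b$. Instantiating the inductive hypothesis at depth $d+1$ converts $\text{WF}_{d+1}\ b$ into $\text{WF}_{d+1}\ \uparrow_{d+1} b$, which is exactly the goal; this is the one place where quantifying over all $d$ in the hypothesis is needed.

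The only case requiring any care is $e = \&i$. Since $\text{WF}_d\ \&i$ holds only when $i < d$, I would assume $i < d$. If $i + 1 \neq d$, then $\uparrow \&i = \&(i+1)$, and from $i < d$ together with $i + 1 \neq d$ we get $i + 1 < d$, hence $\text{WF}_d\ \&(i+1) = \top$. If instead $i + 1 = d$, then $\uparrow \&i = \$(i+1)$, a $\$$-variable, so again $\text{WF}_d\ \uparrow \&i = \top$. Either way the conclusion follows, closing the induction. I do not expect a genuine obstacle here — the lemma is essentially a bookkeeping fact — but the point to get right is the interaction between the $i + 1 = d$ boundary in the definition of $\uparrow$ on $\&i$ and the $i < d$ versus $i \geq d$ split in $\text{WF}_d$: the definition of $\uparrow$ is arranged precisely so that a $\&$-variable sitting just below the cutoff is promoted back to an ordinary $\$$-variable before it could ever become ill-formed.
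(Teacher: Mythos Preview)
Your proposal is correct and follows essentially the same approach as the paper: structural induction on $e$ with $d$ kept universal for the lambda case, handling the $\&i$ case by splitting on whether $i+1 = d$ and using the premise $i < d$ to conclude $i+1 < d$ in the $\&(i+1)$ branch. The organization and level of detail match the paper's proof closely.
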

\begin{proof}
We proceed by induction on e.\\

\textbf{\textit{Case}} $e = \lambda.\ b$. Our goal is

$$\text{WF}_d\ \uparrow_d \lambda. b$$

We can move $\uparrow_d$ and $\text{WF}_d$ into the lambda by unfolding their definitions to get

$$\text{WF}_{d+1}\ \uparrow_{d+1} b$$

We also have the premise $\text{WF}_d\ \lambda. b$, where we can likewise move the $\text{WF}_d$ into the lambda to get $\text{WF}_{d+1}\ b$. The inductive hypothesis $\forall d.\ \text{WF}_d\ b \implies  \text{WF}_d\ \uparrow_d b$ can be instantiated with $d$ as $d+1$ to get $\text{WF}_{d+1}\ b \implies  \text{WF}_{d+1}\ \uparrow_{d+1} b$, the conclusion of which is precisely our goal and the premise of which is precisely our other premise.

\textbf{\textit{Case}} $e = f\ x$. Our goal is

$$\text{WF}_d\ \uparrow_d (f\ x)$$

We can move $\uparrow_d$ and $\text{WF}_d$ into the application by unfolding their definitions to get

$$(\text{WF}_d\ \uparrow_d f)\ \land (\text{WF}_d\ \uparrow_d x)$$

We can use our inductive hypotheses $\text{WF}_d\ f \implies \text{WF}_d\ \uparrow_d f$ to prove the left side of this conjunction if we can prove $\text{WF}_d\ f$. We have the premise that $\text{WF}_d\ (f\ x)$ from this overall induction case, which by definition of $\text{WF}_d$ can only be true if $\text{WF}_d\ f \land \text{WF}_d\ d$. Thus we can prove the left side of the conjunction, and the proof of the right side is identical with $f$ replaced by $x$ using the other inductive hypothesis.

\textbf{\textit{Case}} $e = \$i$. Our goal is

$$\text{WF}_d\ \uparrow_d \$i$$

We will handle this in 2 cases:
\begin{itemize}
\item \textit{Case} $i < d$.
$$\text{WF}_d\ \uparrow_d \$i$$
$$=\text{WF}_d\ \$i$$
This is true because \$i indices are always well-formed.
\item \textit{Case} $i \geq d$.
$$\text{WF}_d\ \uparrow_d \$i$$
$$=\text{WF}_d\ \$(i+1)$$
This is true because \$i indices are always well-formed.
\end{itemize}

\textbf{\textit{Case}} $e = \&i$. Our goal is

$$\text{WF}_d\ \uparrow_d \&i$$

We will handle our goal in 2 cases:
\begin{itemize}
\item \textit{Case} $i + 1 \neq d$.
$$\text{WF}_d\ \uparrow_d \$i$$
$$=\text{WF}_d\ \&(i+1)$$
We know from our premise that $\text{WF}_d\ \&i$ which means $i < d$, so $i + 1 \leq d$. Combining this with our knowledge from the case branch that $i + 1 \neq d$, we know that $i + 1 < d$ which means that $\&(i+1)$ is well formed.
\item \textit{Case} $i + 1 = d$.
$$\text{WF}_d\ \uparrow_d \&i$$
$$=\text{WF}_d\ \$(i+1)$$
This is true because \$i indices are always well-formed.
\end{itemize}

\textbf{\textit{Case}} $e = t$. Our goal is $\text{WF}_d\ \uparrow_d t$ which simplifies by definition of $\uparrow_d$ to $\text{WF}_d\ t$ which is true by definition of $\text{WF}_d$.

\end{proof}

\begin{lemma}\label{lemma:down-well}
$\forall e.\ \forall d.\ \text{WF}_d\ e \implies  \text{WF}_d\ \downarrow_d e$
\end{lemma}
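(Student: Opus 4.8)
The plan is to prove this by structural induction on $e$, exactly paralleling the proof of \cref{lemma:up-well}. In each case the goal is $\text{WF}_d\ \downarrow_d e$ under the premise $\text{WF}_d\ e$, with $d$ left universally quantified so that inductive hypotheses can be instantiated at shifted depths. First I would dispatch the compositional cases. For $e = \lambda.\ b$, unfolding $\downarrow_d$ and $\text{WF}_d$ turns the goal into $\text{WF}_{d+1}\ \downarrow_{d+1} b$ and the premise into $\text{WF}_{d+1}\ b$, so instantiating the inductive hypothesis with $d := d+1$ closes it. For $e = f\ x$, the goal unfolds to the conjunction $(\text{WF}_d\ \downarrow_d f) \land (\text{WF}_d\ \downarrow_d x)$ and the premise splits as $\text{WF}_d\ f \land \text{WF}_d\ x$, so each conjunct follows from the corresponding inductive hypothesis. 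The primitive case $e = t$ is immediate since $\downarrow_d t = t$ and primitives are always well-formed.

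The variable cases carry the real content. For $e = \$i$, I would split on the three branches of $\downarrow_d \$i$: when $i < d$ or $i > d$ the result is again a $\$$-index ($\$i$ or $\$(i-1)$ respectively), which is well-formed at every depth; the interesting branch is $i = d$, where $\downarrow_d \$i = \&(i-1) = \&(d-1)$, and $\text{WF}_d\ \&(d-1)$ holds precisely because $d-1 < d$. For $e = \&i$, the premise $\text{WF}_d\ \&i$ already forces $i < d$, so $i-1 < i < d$ gives $\text{WF}_d\ \&(i-1) = \top$; this step also covers the case where $i-1$ goes negative, since a negative index still satisfies $i-1 < d$ and, as discussed earlier in the excerpt, negative $\&i$ indices are legal transient forms that $\uparrow$ later repairs.

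The main (and only genuinely non-routine) step is the $i = d$ branch of the $\$i$ case: this is the one place where $\downarrow_d$ actually introduces an $\&$-index rather than just shuffling existing ones, so well-formedness could plausibly be violated here — but it is not, because the freshly created index $d-1$ is strictly below the current depth $d$, which is exactly the well-formedness condition for $\&$ variables. This mirrors the "third case" ($i=1$) treatment in the worked example around \cref{eqn:example1b}. Everything else is bookkeeping identical in shape to \cref{lemma:up-well}, and the whole argument transcribes directly into Coq.
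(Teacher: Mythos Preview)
Your proposal is correct and follows essentially the same approach as the paper's proof: structural induction on $e$, with the compositional cases ($\lambda$, application, primitive) handled exactly as in \cref{lemma:up-well} and the real work concentrated in the $i=d$ branch of the $\$i$ case and in the $\&i$ case, both dispatched via the inequality $i-1<d$. The paper in fact explicitly omits the compositional cases by pointing back to \cref{lemma:up-well}, just as you suggest doing.
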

\begin{proof}
We proceed by induction on e. The cases $e = t$, $e = (f\ x)$ and $e = \lambda. b$ are identical to those in \cref{lemma:up-well} but with $\uparrow$ replaced with $\downarrow$ so they will be omitted.\\

\textbf{\textit{Case}} $e = \$i$. Our goal is

$$\text{WF}_d\ \downarrow_d \$i$$

We will handle this in 3 cases:
\begin{itemize}
\item \textit{Case} $i < d$.
$$\text{WF}_d\ \downarrow_d \$i$$
$$=\text{WF}_d\ \$i$$
This is true because \$i indices are always well-formed.

\item \textit{Case} $i > d$.
$$\text{WF}_d\ \downarrow_d \$i$$
$$=\text{WF}_d\ \$(i-1)$$
This is true because \$i indices are always well-formed.

\item \textit{Case} $i = d$.
$$\text{WF}_d\ \downarrow_d \$i$$
$$=\text{WF}_d\ \&(i-1)$$

Since $i = d$ we know that $i - 1 < d$ which means $\&(i-1)$ is well-formed.
\end{itemize}

\textbf{\textit{Case}} $e = \&i$. Our goal is

$$\text{WF}_d\ \downarrow_d \&i$$

which simplifies to

$$\text{WF}_d\ \&(i-1)$$

We know from our premise that $\text{WF}_d\ \&i$ so $i < d$ which means $i - 1 < d$  so $\&(i-1)$ is well-formed.

\end{proof}

\ \\\ \\\noindent \textbf{The following lemma shows that upshifting is an inverse of downshifting.} In Coq this lemma is \texttt{upshift\_downshift}.

\begin{lemma}\label{lemma:up-down}
$\forall e.\ \forall d.\ \text{WF}_d\ e \implies \uparrow_d \downarrow_d e = e$
\end{lemma}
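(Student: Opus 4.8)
The plan is to prove the statement by structural induction on $e$, closely mirroring the inductions used for \cref{lemma:up-well} and \cref{lemma:down-well}. The cases $e = t$ (a primitive), $e = (f\ x)$, and $e = \lambda.\ b$ are routine and essentially identical to the corresponding cases there: in each, both $\downarrow_d$ and $\uparrow_d$ commute with the term constructor — incrementing $d$ to $d+1$ when passing under a binder — so the goal reduces directly to the inductive hypotheses after noting that $\text{WF}_d$ of a compound term yields $\text{WF}$ of its immediate subterms (at depth $d$ for the two sides of an application, at depth $d+1$ for the body of a lambda). For the lambda case the inductive hypothesis is instantiated at depth $d+1$.

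The substantive cases are the two variable forms. For $e = \$i$ I would split on the three branches of $\downarrow_d\ \$i$. If $i < d$, then $\downarrow_d\ \$i = \$i$, and since still $i < d$ we get $\uparrow_d\ \$i = \$i$. If $i > d$, then $\downarrow_d\ \$i = \$(i-1)$, and since $i > d$ gives $i - 1 \geq d$ we get $\uparrow_d\ \$(i-1) = \$((i-1)+1) = \$i$. The delicate branch is $i = d$: here $\downarrow_d\ \$i = \&(i-1)$, and applying $\uparrow_d$ to $\&(i-1)$ we check the side condition $(i-1)+1 = d$, which holds precisely because $i = d$, so $\uparrow_d\ \&(i-1) = \$((i-1)+1) = \$i$, as required. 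For $e = \&i$ the well-formedness premise is what does the work: $\text{WF}_d\ \&i$ forces $i < d$; then $\downarrow_d\ \&i = \&(i-1)$, and when applying $\uparrow_d$ the side condition $(i-1)+1 = d$ is equivalent to $i = d$, which is \emph{false} by $i < d$, so we land in the other branch and obtain $\&((i-1)+1) = \&i$.

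The main obstacle — more a matter of care than of genuine difficulty — lies exactly in the $\$i$ case with $i = d$ together with the $\&i$ case: these are the points where a variable changes syntactic form under $\downarrow_d$ (a $\$$ turning into a $\&$, or a $\&$ index being lowered and then restored), and where the two branches of the modified $\uparrow$ on $\&$ indices must be shown to fire in the right way. The hypothesis $\text{WF}_d\ e$ is used solely to exclude the single configuration ($\&i$ with $i \geq d$) for which $\uparrow_d \downarrow_d$ would fail to round-trip; once that case is ruled out, every remaining branch matches by elementary index arithmetic. In the Coq development this is the lemma \texttt{upshift\_downshift}, and I would expect the formal proof to follow the same induction and case split.
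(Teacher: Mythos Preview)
Your proposal is correct and follows essentially the same approach as the paper: a structural induction on $e$, with the compound cases ($t$, application, lambda) handled by pushing the shifts through the constructors and invoking the inductive hypotheses, and the two variable cases handled by exactly the case splits and index arithmetic you describe. The paper likewise uses $\text{WF}_d$ only in the $\&i$ case to rule out $i \geq d$, and identifies the $i = d$ branch of the $\$i$ case as the point where the special $\&$-handling in $\uparrow_d$ is needed.
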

\begin{proof}

We proceed by induction on $e$.\\

\textbf{\textit{Case}} $e = \lambda.\ b$. Our goal is

$$\uparrow_d \downarrow_d \lambda.\ b = \lambda.\ b$$

By the definitions of $\uparrow_d$ and $\downarrow_d$ we can propagate them into the body of a lambda while incrementing $d$:

$$\lambda.\ \uparrow_{d+1} \downarrow_{d+1} b = \lambda.\ b$$

We also have the additional premise $\text{WF}_d\ \lambda. b$ which simplifies to $\text{WF}_{d+1}\ b$.

Since our inductive hypothesis $\forall d.\ \text{WF}_d\ b \implies \uparrow_d \downarrow_d b = b$ is universally quantified over $d$ we can instantiate it with $d+1$ in place of $d$ and discharge its precondition with our other premise $\text{WF}_{d+1}\ b$ to get $\uparrow_{d+1} \downarrow_{d+1} b = b$ and apply this to our goal to get

$$\lambda.\ b = \lambda.\ b$$

which is trivially true.\\

\textbf{\textit{Case}} $e = f\ x$. Our goal is

$$\uparrow_d \downarrow_d (f\ x) = f\ x$$

By the definitions of $\uparrow_d$ and $\downarrow_d$ we can propagate them into both sides of the application:

$$(\uparrow_d \downarrow_d f) (\uparrow_d \downarrow_d x) = f\ x$$

We also have the additional premise $\text{WF}_d\ (f\ x)$ can be broken into the two premises $\text{WF}_d\ f$ and $\text{WF}_d\ x$.

We can then apply our inductive hypotheses $\forall d.\ \text{WF}_d\ f \implies \uparrow_d \downarrow_d f = f$ and $\forall d.\ \text{WF}_d\ x \implies \uparrow_d \downarrow_d x = x$ with these premises to rewrite our goal to 

$$f\ x = f\ x$$

which is trivially true.\\

\textbf{\textit{Case}} $e = \$i$. Our goal is

$$\uparrow_d \downarrow_d \$i = \$i$$

We will handle this in 3 cases:

\begin{itemize}

\item \textit{Case} $i < d$.
$$\uparrow_d \downarrow_d \$i = \$i$$
Unfold definition of $\downarrow_d$
$$\uparrow_d \$i = \$i$$
Unfold definition of $\uparrow_d$
$$\$i = \$i$$

\item \textit{Case} $i > d$.
$$\uparrow_d \downarrow_d \$i = \$i$$
Unfold definition of $\downarrow_d$
$$\uparrow_d \$(i-1) = \$i$$
Unfold definition of $\uparrow_d$ noting that $(i-1) \geq d$
$$\$((i-1)+1) = \$i$$
Simplify
$$\$i = \$i$$

\item \textit{Case} $i = d$.
$$\uparrow_d \downarrow_d \$i = \$i$$
Unfold definition of $\downarrow_d$
$$\uparrow_d \&(i-1) = \$i$$
Unfold definition of $\uparrow_d$ noting that $(i-1)+1 = d$
$$\$((i-1)+1) = \$i$$
Simplify
$$\$i = \$i$$

\end{itemize}

\textbf{\textit{Case}} $e = \&i$. Our goal is

$$\uparrow_d \downarrow_d \&i = \&i$$

Unfolding the definition of $\downarrow_d$ yields

$$\uparrow_d \&(i-1) = \&i$$

Since we know that $\text{WF}_d\ \&i$ by our premise, we know that $i < d$ so $(i - 1) + 1 < d$ so $(i - 1) + 1 \neq d$ so $\uparrow_d$ unfolds to:

$$\&((i-1)+1) = \&i$$

which simplifies to

$$\&i = \&i$$

which is trivially true.\\

\textbf{\textit{Case}} $e = t$. Our goal is

$$\uparrow_d \downarrow_d t = t$$

Neither $\uparrow_d$ nor $\downarrow_d$ have any effect on a grammar primitive $t$, so this is trivially true.

\end{proof}

\ \\\noindent \textbf{The following lemma shows that any result of \lambdaU\ is a well-formed mapping.} In Coq this lemma is called \texttt{lu\_wf}.
\begin{lemma}\label{lemma:lu_wf}
    $\forall A.\ \forall e. \text{WFMap}\ \lambdaU(A,e)$
\end{lemma}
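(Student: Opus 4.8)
The plan is to prove \cref{lemma:lu_wf} by induction on the derivation of $\lambdaU(A,e)\rightsquigarrow l$ (equivalently, by structural induction on the abstraction $A$ with a case split on which rule was applied), establishing that every expression bound in the resulting mapping $l$ is well-formed, i.e.\ satisfies $\text{WF}_0$. A standing invariant throughout is that the second argument $e$ is a genuine expression --- a subtree of a program --- and therefore contains no $\&i$ indices; by the remark following the definition of $\text{WF}$, such an $e$ satisfies $\text{WF}_d(e)$ for every $d$. This invariant is preserved by the recursive calls, since in \textsc{U-App} and \textsc{U-Lam} the recursive calls pass subexpressions of $e$, which are again $\&i$-free.

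The base cases are immediate. In \textsc{U-AbsVar} and \textsc{U-Hole} the mapping is the singleton $[\alpha\to e]$ (resp.\ $[\hole_i\to e]$), and the bound expression $e$ is well-formed by the invariant above. In \textsc{U-Same} the mapping is empty, so $\text{WFMap}$ holds trivially; note this case subsumes any situation where $A$ happens to be syntactically identical to $e$, so the potential rule overlap with \textsc{U-App}/\textsc{U-Lam} is harmless.

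For \textsc{U-App}, the result is $\textsc{merge}(l_1,l_2)$ where $l_1,l_2$ are produced by the recursive calls, and the induction hypothesis gives $\text{WFMap}(l_1)$ and $\text{WFMap}(l_2)$. I would discharge this with a one-line auxiliary observation (precisely the place the excerpt flags as requiring more bureaucracy in Coq): every binding of $\textsc{merge}(l_1,l_2)$ already occurs in $l_1$ or in $l_2$, so every bound expression is already known to be well-formed; if $\textsc{merge}$ fails there is no $l$ and nothing to prove. For \textsc{U-Lam}, the result is $\textsc{DownshiftAll}(l')$ where $\lambdaU(A,e)\rightsquigarrow l'$ and the induction hypothesis gives $\text{WFMap}(l')$. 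Since $\textsc{DownshiftAll}$ simply replaces each bound expression $b$ by $\downarrow_0 b$, the goal reduces to showing $\text{WF}_0(b)\implies\text{WF}_0(\downarrow_0 b)$, which is exactly \cref{lemma:down-well} instantiated at $d=0$.

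There is no genuinely hard step: the only non-bookkeeping part is the \textsc{U-Lam} case, and it is settled wholesale by the already-established \cref{lemma:down-well} that downshifting preserves well-formedness. The two points that require mild care are (i) making explicit the invariant that the expression argument to \lambdaU\ never carries $\&i$ indices --- without it the \textsc{U-AbsVar}/\textsc{U-Hole} base cases would fail in general --- and (ii), in a fully formal development, stating and proving the small fact about $\textsc{merge}$ used in the \textsc{U-App} case; both are routine.
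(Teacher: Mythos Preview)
Your proposal is correct and follows essentially the same approach as the paper: induction on the \lambdaU\ derivation, with the base cases handled by the observation that the input expression $e$ contains no $\&i$ indices, the \textsc{U-Lam} case discharged by \cref{lemma:down-well}, and the \textsc{U-App} case by the fact that \textsc{merge} only combines existing bindings. Your write-up is simply more explicit than the paper's about the standing invariant on $e$ and the \textsc{U-Same} case, but there is no substantive difference.
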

\begin{proof}
    This is a straightforward proof by induction. In the $\alpha_i$ and and $\hole_i$ cases \lambdaU\ adds expressions from the original expression to the mapping, which are well-formed as they can't contain $\&i$ variables. In the lambda case \textsc{DownshiftAll} shifts variables using $\downarrow$ which is guaranteed to preserve well-formedness by \cref{lemma:down-well}. In the application case \textsc{merge} simply combines mappings from the function and argument calls to \lambdaU\ which are well-formed by induction, resulting in an overall well-formed mapping.
\end{proof}

\ \\\noindent \textbf{The following lemma follows from \cref{lemma:up-down} and shows that \textsc{UpshiftAll} is an inverse of \textsc{DownshiftAll}.} In Coq this lemma is \texttt{upshift\_all\_downshift\_all}.

\begin{lemma}\label{lemma:upshift-downshift}
 $\text{WFMap}\ l \implies \textsc{UpshiftAll}(\textsc{DownshiftAll}(l)) = l$
\end{lemma}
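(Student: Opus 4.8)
The plan is to unfold the two map-level operations and reduce the claim to the single-expression inverse law already established in \cref{lemma:up-down}. By definition, $\textsc{DownshiftAll}([\alpha_i \to e_i',\ \hole_j \to u_j',\ \ldots]) = [\alpha_i \to \downarrow_0 e_i',\ \hole_j \to \downarrow_0 u_j',\ \ldots]$ and $\textsc{UpshiftAll}([\alpha_i \to e_i',\ \hole_j \to u_j',\ \ldots]) = [\alpha_i \to \uparrow_0 e_i',\ \hole_j \to \uparrow_0 u_j',\ \ldots]$; in particular, both operations preserve the set of keys (abstraction variables and holes) and merely rewrite each bound expression. Hence $\textsc{UpshiftAll}(\textsc{DownshiftAll}(l))$ has exactly the same keys as $l$, and the value it binds to a given key is $\uparrow_0(\downarrow_0 e)$, where $e$ is the value $l$ binds to that key. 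So it suffices to show $\uparrow_0 \downarrow_0 e = e$ for every expression $e$ appearing in $l$.

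First I would observe that the hypothesis $\text{WFMap}\ l$ means precisely that every such $e$ is well-formed, i.e.\ satisfies $\text{WF}_0\ e$ (taking "well-formed" in the definition of $\text{WFMap}$ to mean well-formed at depth $0$, which is the relevant depth since $\textsc{UpshiftAll}$ and $\textsc{DownshiftAll}$ apply $\uparrow_0$ and $\downarrow_0$). Then I would instantiate \cref{lemma:up-down}, whose statement is $\forall e.\ \forall d.\ \text{WF}_d\ e \implies \uparrow_d \downarrow_d e = e$, with $d := 0$ and the given $e$, discharging the premise $\text{WF}_0\ e$ using $\text{WFMap}\ l$. This yields $\uparrow_0 \downarrow_0 e = e$ for each bound expression, and, combining this over all keys, $\textsc{UpshiftAll}(\textsc{DownshiftAll}(l)) = l$.

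There is essentially no hard step here: this lemma is a bookkeeping corollary of \cref{lemma:up-down}, and the only thing to be careful about is that the depth index used by $\textsc{UpshiftAll}$ and $\textsc{DownshiftAll}$ (namely $0$) coincides with the depth at which well-formedness is assumed of the mapping's values, so that \cref{lemma:up-down} applies verbatim. If the written development wants to be fully explicit, the one routine sub-step is an induction over the finite list of key--value pairs of $l$, pushing the pointwise identity $\uparrow_0 \downarrow_0 e = e$ through the list; but since neither $\textsc{UpshiftAll}$ nor $\textsc{DownshiftAll}$ reorders or merges entries, this induction is immediate. (In the Coq development this is the lemma \texttt{upshift\_all\_downshift\_all}, which is expected to follow from \texttt{upshift\_downshift} by such a map-lemma.)
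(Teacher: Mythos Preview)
Your proposal is correct and follows essentially the same approach as the paper: unfold the definitions of \textsc{UpshiftAll} and \textsc{DownshiftAll} to reduce the map-level equality to the pointwise identity $\uparrow_0 \downarrow_0 e = e$, then discharge each instance via \cref{lemma:up-down} at $d=0$ using the $\text{WFMap}\ l$ hypothesis. The paper's write-up is slightly terser (it does not spell out the list induction or the key-preservation remark), but the argument is the same.
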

\begin{proof}

Writing $l$ as $[\alpha_i \to\ e_i',\ \hole_j \to\  u_j',\ ...]$ and unfolding the definitions of \textsc{UpshiftAll} \textsc{DownshiftAll} we get:

$$[\alpha_i \to\ \uparrow_0 \downarrow_0 e_i',\ \hole_j \to\ \uparrow_0 \downarrow_0 u_j',\ ...] = [\alpha_i \to\ e_i',\ \hole_j \to\  u_j',\ ...] $$

This equality will hold if $\uparrow_0 \downarrow_0 u_j' = u_j'$ and $\uparrow_0 \downarrow_0 e_i' = e_i'$ for all $u_j'$ and $e_i'$. $\text{WFMap}\ l$ tells us that all expressions $u_j'$ and $e_i'$ are well-formed and thus satisfy the precondition of \cref{lemma:up-down}, which proves these equalities true.

\end{proof}

\noindent \textbf{The following two lemmas allow us to "forget" one side of a \textsc{merge} when it won't have any effect on the substitution.} In Coq the first lemma is \texttt{merge\_forget3} and the second is captured by the definition \texttt{Compatible}.
\begin{lemma}\label{lemma:merge-forget-right}
    If $l_1$ is a mapping that binds every abstraction variable and hole in $A$, then $\textsc{merge}(l_1,l_2) \subst A = l_1 \subst A$.
\end{lemma}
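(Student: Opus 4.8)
The plan is to prove, by structural induction on the partial abstraction, the slightly generalized statement: for every partial abstraction $A$ and all mappings $l_1,l_2$ with $\textsc{merge}(l_1,l_2)$ defined and $l_1$ binding every abstraction variable and hole occurring in $A$, we have $\textsc{merge}(l_1,l_2) \subst A = l_1 \subst A$. The intuition driving the induction is that $\subst$ only ever consults its mapping argument at the leaves $\alpha_i$ and $\hole_i$, and on precisely those keys $\textsc{merge}(l_1,l_2)$ must agree with $l_1$: merge retains all of $l_1$'s bindings, and whenever $l_2$ also binds such a key it binds it to the same expression (otherwise merge would have failed).

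First I would handle the leaf cases. For $A = t$ and $A = \$i$ both sides equal $A$ by the definition of $\subst$ in \cref{fig:subst-upshift}. For $A = \alpha_i$ the hypothesis gives $\alpha_i \in \mathrm{dom}(l_1)$, and since $\textsc{merge}(l_1,l_2)[\alpha_i] = l_1[\alpha_i]$ both sides equal $l_1[\alpha_i]$; the case $A = \hole_i$ is identical. For an application $A = A_1\ A_2$, the hypothesis on $l_1$ restricts to the subterms $A_1$ and $A_2$, so the two induction hypotheses give $\textsc{merge}(l_1,l_2) \subst A_k = l_1 \subst A_k$ for $k=1,2$, and pushing $\subst$ through the application finishes the case.

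The main obstacle is the lambda case $A = \lambda.\ b$, since there $\subst$ unfolds to $\lambda.\ \textsc{UpshiftAll}(l) \subst b$ and the mapping is altered during the recursion. I would first establish the auxiliary fact that \textsc{UpshiftAll} commutes with \textsc{merge}, i.e.\ $\textsc{UpshiftAll}(\textsc{merge}(l_1,l_2)) = \textsc{merge}(\textsc{UpshiftAll}(l_1),\textsc{UpshiftAll}(l_2))$, with the right-hand side defined whenever the left is. This holds because \textsc{UpshiftAll} applies $\uparrow_0$ pointwise to each bound expression without changing the set of keys, so it preserves both the union of bindings and the consistency condition required for merge to succeed. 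Since $\textsc{UpshiftAll}(l_1)$ still binds every abstraction variable and hole of $b$ (its domain is unchanged), the induction hypothesis applied to $b$ with the mappings $\textsc{UpshiftAll}(l_1),\textsc{UpshiftAll}(l_2)$ yields $\textsc{merge}(\textsc{UpshiftAll}(l_1),\textsc{UpshiftAll}(l_2)) \subst b = \textsc{UpshiftAll}(l_1) \subst b$; reassembling the $\lambda$ gives exactly $\textsc{merge}(l_1,l_2) \subst \lambda.\ b = l_1 \subst \lambda.\ b$. The lemma as stated is the special case in which $A$ is the entire abstraction body. Note finally that no $\&i$ case arises, because the grammar $\mathcal{G}_{\abshole}$ of partial abstractions contains no $\&i$ leaves, so the case analysis above is exhaustive.
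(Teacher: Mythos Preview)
Your proof is correct and rests on the same key observation as the paper's: since $l_1$ already binds every abstraction variable and hole appearing in $A$, and \textsc{merge} fails rather than overwrite conflicting bindings, the merged map agrees with $l_1$ on every key that substitution ever consults.

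The difference is one of rigor rather than strategy. The paper dispatches the lemma in two sentences at that level of abstraction, without an explicit induction. You instead carry out the full structural induction on $A$, and in particular you identify and discharge the one non-trivial step the paper's informal argument glosses over: in the $\lambda$ case, the mapping is transformed by \textsc{UpshiftAll} before the recursive substitution, so one needs the commutation $\textsc{UpshiftAll}(\textsc{merge}(l_1,l_2)) = \textsc{merge}(\textsc{UpshiftAll}(l_1),\textsc{UpshiftAll}(l_2))$ to keep the induction going. Your justification of that commutation (pointwise action on values, unchanged key set, consistency preserved) is sound. So your argument is a faithful formalization of the paper's sketch, with the added value that it makes the lambda case explicit.
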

\begin{proof}
$l_1$ already assigns to every abstraction variable and hole, and merging in $l_2$ can't overwrite these assignments since \textsc{merge} fails when the same variable or hole is assigned to two different expressions. Thus merging in $l_2$ has no effect on the resulting expression from substitution.
\end{proof}

\begin{lemma}\label{lemma:merge-forget-left}
    If $l_2$ is a mapping that binds every abstraction variable and hole in $A$, then $\textsc{merge}(l_1,l_2) \subst A = l_2 \subst A$.
\end{lemma}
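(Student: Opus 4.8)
The plan is to reduce this to \cref{lemma:merge-forget-right} using the symmetry of \textsc{merge}. First observe that \textsc{merge} is defined symmetrically: $\textsc{merge}(l_1,l_2)$ is the mapping whose bindings are exactly those of $l_1$ together with those of $l_2$, and it is undefined precisely when some abstraction variable or hole is bound to two syntactically distinct expressions by $l_1$ and $l_2$. Consequently $\textsc{merge}(l_1,l_2)$ and $\textsc{merge}(l_2,l_1)$ have the same set of bindings --- in particular one is defined iff the other is, and when both are defined they agree on every abstraction variable and hole.

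If $\textsc{merge}(l_1,l_2)$ is undefined the equation holds vacuously, so assume it is defined. Applying \cref{lemma:merge-forget-right} with the roles of $l_1$ and $l_2$ exchanged --- legitimate since the hypothesis there is exactly that the \emph{first} argument binds every abstraction variable and hole in $A$, which $l_2$ does here --- gives $\textsc{merge}(l_2,l_1) \subst A = l_2 \subst A$. Combining this with the observation that $\textsc{merge}(l_1,l_2)$ and $\textsc{merge}(l_2,l_1)$ induce the same substitution on $A$ --- they agree on all the variables and holes that $A$ can consult, and $\subst$ is insensitive to the bindings at any other variable, a fact established by a routine induction on $A$ in which the lambda case uses that \textsc{UpshiftAll} acts pointwise and hence preserves agreement --- yields $\textsc{merge}(l_1,l_2) \subst A = l_2 \subst A$.

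Alternatively, one can give the direct argument mirroring \cref{lemma:merge-forget-right}: $l_2$ already assigns to every abstraction variable and hole of $A$, and merging in $l_1$ cannot overwrite any of these assignments, since \textsc{merge} fails whenever the two mappings disagree on a variable or hole; hence merging in $l_1$ leaves the result of substituting into $A$ unchanged. I expect essentially no obstacle here --- the only point requiring a modicum of care is the bookkeeping that substitution on $A$ depends only on the bindings at variables and holes that actually occur in $A$ (so that agreement is preserved under the $\subst$ recursion, including the \textsc{UpshiftAll} step), which the recursive definition of $\subst$ in \cref{fig:subst-upshift} makes straightforward to verify.
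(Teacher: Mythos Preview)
Your proposal is correct, and the direct argument you give in the final paragraph is exactly the paper's proof: $l_2$ already binds every abstraction variable and hole in $A$, and since \textsc{merge} fails on conflicting bindings, merging in $l_1$ cannot change any binding that substitution into $A$ would consult. The symmetry-based reduction you lead with is a small but valid detour that the paper does not take; the paper simply repeats the reasoning of \cref{lemma:merge-forget-right} with the roles of $l_1$ and $l_2$ swapped.
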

\begin{proof}
This proof is identical in structure to \cref{lemma:merge-forget-right}. $l_2$ already assigns to every abstraction variable and hole, and merging in $l_1$ can't overwrite these assignments since \textsc{merge} fails when the same variable or hole is assigned to two different expressions. Thus merging in $l_1$ has no effect on the resulting expression from substitution.
\end{proof}

\ \\\noindent \textbf{The following lemma shows that any result of \lambdaU\ satisfies the precondition of \cref{lemma:merge-forget-left} and \cref{lemma:merge-forget-right}.}

\begin{lemma}\label{lemma:binds-all}
    $\lambdaU(A,e)$ returns a mapping that binds all abstraction variables $\alpha_i$ and all holes $\hole_i$ in $A$.
\end{lemma}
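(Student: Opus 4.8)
The plan is to prove this by induction on the derivation of $\lambdaU(A,e) \rightsquigarrow l$, following the five inference rules of \cref{fig:lambda_unify}; throughout we may assume, as the statement presupposes, that $\lambdaU(A,e)$ succeeds and returns some mapping $l$. We must show that the set of keys of $l$ includes every abstraction variable $\alpha_i$ and every hole $\hole_i$ occurring in $A$.

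The base cases are immediate. For \textsc{U-AbsVar}, $A = \alpha$ and $l = [\alpha \to e]$, which binds $\alpha$ --- the only abstraction variable in $A$ --- and $A$ contains no holes; \textsc{U-Hole} is symmetric. For \textsc{U-Same}, the rule applies only when $A$ is a pure expression with no holes and no abstraction variables, so the returned empty mapping vacuously binds all (zero) of them.

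For the inductive cases, in \textsc{U-App} we have $A = A_1\ A_2$ and $l = \textsc{merge}(l_1, l_2)$ where $\lambdaU(A_k, e_k) \rightsquigarrow l_k$; by the induction hypothesis $l_k$ binds every abstraction variable and hole of $A_k$, and since a successful $\textsc{merge}$ returns a mapping containing all bindings of both of its arguments, $l$ binds everything in $A_1$ together with everything in $A_2$, i.e. everything in $A_1\ A_2$. In \textsc{U-Lam} we have $A = \lambda.\ A'$ matched against $\lambda.\ e'$, and $l = \textsc{DownshiftAll}(l')$ where $\lambdaU(A', e') \rightsquigarrow l'$. The key observation is that, by its definition in \cref{fig:lambda_unify} (right), $\textsc{DownshiftAll}$ only rewrites the \emph{values} of a mapping (applying $\downarrow_0$ to each) and leaves its set of keys untouched, so $l$ and $l'$ bind exactly the same abstraction variables and holes; by the induction hypothesis these are all those of $A'$, and wrapping a term in the $\lambda.\ (\cdot)$ production of $\mathcal{G}_{\abshole}$ introduces no new abstraction variables or holes (it is an ordinary de~Bruijn binder, not a binder for the Greek variables, which remain free in the body of $A$), so $A$ and $A'$ have exactly the same abstraction variables and holes, closing this case.

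There is essentially no hard step here. The only two facts that require care are purely structural and follow directly from the relevant definitions: a successful $\textsc{merge}$ retains all bindings of both of its arguments, and $\textsc{DownshiftAll}$ preserves the domain of its argument mapping. (Accordingly, in the Coq development this lemma amounts to a routine domain-preservation argument.)
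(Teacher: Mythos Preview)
Your proof is correct and follows essentially the same approach as the paper's: an induction over the \lambdaU\ derivation, noting that the base cases bind the single variable/hole (or there are none in \textsc{U-Same}), that \textsc{merge} combines the domains in \textsc{U-App}, and that \textsc{DownshiftAll} preserves the domain in \textsc{U-Lam}. Your write-up is in fact more detailed than the paper's one-sentence sketch, making explicit the two structural facts (about \textsc{merge} and \textsc{DownshiftAll}) that the paper leaves implicit.
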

\begin{proof}
    This is a straightforward proof by induction. In the $\alpha_i$ and and $\hole_i$ cases \lambdaU\ binds the variable and hole respectively, in the lambda case \textsc{DownshiftAll} has no effect on which abstraction variables are bound (and it binds all the variables in the body by induction), and in the application case \textsc{merge} results in combining the function and argument mappings (which bind all variables by induction) to result in a mapping which binds all variables that appear in both the function and argument.
\end{proof}

\subsection{Correctness Theorem}

In Coq this theorem is called \texttt{correctness}.

\begin{theorem}\label{lemma:correctness}
 $\forall A.\ \forall e.\ \lambdaU(A,e) \subst  A = e$
\end{theorem}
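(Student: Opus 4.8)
The plan is to prove \cref{lemma:correctness} by induction on the derivation of $\lambdaU(A,e)\rightsquigarrow l$, which amounts to structural induction on $A$ with a case split on which of the five rules in \cref{fig:lambda_unify} was applied; in each case the goal is $l \subst A = e$ for the mapping $l$ returned by that rule. The rules \textsc{U-AbsVar} and \textsc{U-Hole} are immediate from the definition of $\subst$, since $[\alpha\to e]\subst\alpha = e$ and $[\hole_i\to e]\subst\hole_i = e$. The rule \textsc{U-Same} needs one small auxiliary fact: $[\,]\subst e = e$ for every expression $e$ containing no holes and no abstraction variables; this is a routine induction on $e$, using that $\textsc{UpshiftAll}([\,]) = [\,]$ so the empty mapping passes unchanged through lambdas, and that $\subst$ is the identity on $\$i$ and on primitives.

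For the application case \textsc{U-App} we have $l = \textsc{merge}(l_1,l_2)$ with $l_1 = \lambdaU(A_1,e_1)$ and $l_2 = \lambdaU(A_2,e_2)$. Unfolding $\subst$ at an application gives $l\subst(A_1\ A_2) = (l\subst A_1)(l\subst A_2)$. By \cref{lemma:binds-all}, $l_1$ binds every abstraction variable and hole occurring in $A_1$, so \cref{lemma:merge-forget-right} yields $l\subst A_1 = l_1\subst A_1$; symmetrically, \cref{lemma:binds-all} together with \cref{lemma:merge-forget-left} gives $l\subst A_2 = l_2\subst A_2$. The two inductive hypotheses, $l_1\subst A_1 = e_1$ and $l_2\subst A_2 = e_2$, then close the case.

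The lambda case \textsc{U-Lam} is the heart of the argument. Here $l = \textsc{DownshiftAll}(l')$ with $l' = \lambdaU(A,e)$, and we must show $\textsc{DownshiftAll}(l')\subst(\lambda.\ A) = \lambda.\ e$. Unfolding $\subst$ at a lambda pushes an $\textsc{UpshiftAll}$ inside, turning the left-hand side into $\lambda.\ \bigl(\textsc{UpshiftAll}(\textsc{DownshiftAll}(l'))\subst A\bigr)$. By \cref{lemma:lu_wf} the mapping $l'$ is well-formed, so \cref{lemma:upshift-downshift} collapses $\textsc{UpshiftAll}(\textsc{DownshiftAll}(l'))$ back to $l'$, leaving $\lambda.\ (l'\subst A)$, which is $\lambda.\ e$ by the inductive hypothesis.

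I expect the main obstacle to be exactly this last step: it hinges on $\textsc{UpshiftAll}$ being a genuine left inverse of $\textsc{DownshiftAll}$, which holds only on well-formed mappings and rests on the whole chain of shifting lemmas (\cref{lemma:up-well}, \cref{lemma:down-well}, \cref{lemma:up-down}, \cref{lemma:lu_wf}, \cref{lemma:upshift-downshift}). This is where the bookkeeping around the special $\&i$ indices matters: $\downarrow$ can push a variable's index below $0$ and create $\&i$ with small or negative values, and we need $\uparrow$ to undo precisely those moves on the well-formed expressions that \lambdaU\ actually produces --- not on arbitrary expressions, where the inversion would fail. Once the well-formedness invariant of \cref{lemma:lu_wf} is in hand, the remaining structural induction goes through without further surprises.
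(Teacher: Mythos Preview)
Your proposal is correct and matches the paper's proof essentially line for line: both proceed by induction on the \lambdaU\ derivation, dispatch \textsc{U-AbsVar}, \textsc{U-Hole}, and \textsc{U-Same} directly from the definition of $\subst$, handle \textsc{U-App} via \cref{lemma:binds-all} together with \cref{lemma:merge-forget-right}/\cref{lemma:merge-forget-left} before applying the inductive hypotheses, and reduce \textsc{U-Lam} to the inductive hypothesis by invoking \cref{lemma:lu_wf} and \cref{lemma:upshift-downshift} to cancel $\textsc{UpshiftAll}\circ\textsc{DownshiftAll}$. Your identification of the \textsc{U-Lam} case and the well-formedness invariant as the crux is exactly right.
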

\begin{proof}

We proceed by induction over the proof tree of $\lambdaU(A,e)$.\\

\textit{Case} \textsc{U-AbsVar}. Our goal is $\lambdaU(A,\alpha) \subst  \alpha = e$ which simplifies by definition of $\lambdaU$ to $[\alpha \to e] \subst  \alpha = e$ which is true by the definition of substitution.\\

\textit{Case} \textsc{U-Hole}. Our goal is  $\lambdaU(A,\hole) \subst  \hole = e$ which simplifies by definition of $\lambdaU$ to $[\hole \to e] \subst  \hole = e$ which is true by the definition of substitution.\\

\textit{Case} \textsc{U-App}. Our goal is

$$\lambdaU(A_1\ A_2,\ e_1\ e_2) \subst (A_1\ A_2) = e_1\ e_2$$

This simplifies by definition of $\lambdaU$ to 

$$ \textsc{merge}(\lambdaU(A_1,e_1),\lambdaU(A_2,e_2)) \subst (A_1\ A_2) = e_1\ e_2$$

We can distribute the substitution over the application (by definition of substitution in \cref{fig:subst-upshift} (Left)) to get

\begin{align*}
(&(\textsc{merge}(\lambdaU(A_1,e_1),\lambdaU(A_2,e_2)) \subst A_1)\\ &(\textsc{merge}(\lambdaU(A_1,e_1),\lambdaU(A_2,e_2)) \subst A_2))\\
&= e_1\ e_2
\end{align*}

We can then forget the irrelevant part of each merge using \cref{lemma:merge-forget-left} and \cref{lemma:merge-forget-right} where the precondition (binding all abstraction variables and holes) is satisfied by \cref{lemma:binds-all}:

\begin{align*}
(&(\lambdaU(A_1,e_1) \subst A_1)\\
&(\lambdaU(A_2,e_2) \subst A_2))\\
&= e_1\ e_2
\end{align*}

We can then finish the proof by directly applying our inductive hypotheses $\lambdaU(A_1,e_1) \subst A_1 = e_1$ and $\lambdaU(A_2,e_2) \subst A_2 = e_2$.\\

\textit{Case} \textsc{U-Lam}. Our goal is

$$\lambdaU(\lambda.\ A,\lambda.\ e) \subst \lambda.\ A = \lambda.\ e$$

This simplifies by definition of $\lambdaU$ to 

$$\textsc{DownshiftAll}(\lambdaU(A,e)) \subst \lambda.\ A = \lambda.\ e$$

We can move the substitution under the lambda while inserting $\textsc{UpshiftAll}$ as per the definition of substitution

$$\lambda.\ \textsc{UpshiftAll}(\textsc{DownshiftAll}(\lambdaU(A,e))) \subst A = \lambda.\ e$$

Since we know that any result of \lambdaU\ is well-formed by \cref{lemma:lu_wf} we can apply \cref{lemma:upshift-downshift} to get

$$\lambda.\ \lambdaU(A,e) \subst A = \lambda.\ e$$

We can then apply our inductive hypothesis $\lambdaU(A,e) \subst A = e$ to get

$$\lambda.\ e = \lambda.\ e$$

and we are done.\\

\textit{Case} \textsc{U-Same}. Our goal is $\lambdaU(e,e) \subst  e = e$ which simplifies by definition of $\lambdaU$ to $[] \subst  e = e$ which is trivially true as the empty substitution has no effect.

\end{proof}

\end{document}